\makeatletter\setlength{\@fptop}{0pt}\makeatother 
\newcolumntype{C}{>{\centering\arraybackslash}X}
\newtheorem{definition}{Definition}
\theoremstyle{definition} 
\newtheorem{theorem}{Theorem}[section] 
\newtheorem{lemma}[theorem]{Lemma}      
\newtheorem{proposition}[theorem]{Proposition} 
  {\thispagestyle{empty}\vspace*{\fill}\begin{center}}%
  {\end{center}\vspace*{\fill}}
\newtheorem{corollary}[theorem]{Corollary}   
\newcommand{\To}{\textbf{to}}
\title{Triangle Detection in Worst-Case Sparse Graphs via Local Sketching}
\author{%
  Hongyi Duan \\
  HKUST(GZ)\\
  \texttt{Dann\_Hiroaki@ieee.org} \\
  \And
  Jian'an Zhang \\
  Peking University\\
  \texttt{2501111059@stu.pku.edu.cn} \\ 
}
\begin{document}
\maketitle

\begin{abstract}
We present a non-algebraic, locality-preserving framework for triangle detection in worst-case sparse graphs. Our algorithm processes the graph in $O(\log n)$ independent layers and partitions incident edges into prefix-based classes where each class maintains a 1-sparse triple over a prime field. Potential witnesses are surfaced by \emph{pair-key} (PK) alignment, and every candidate is verified by a three-stage, \emph{zero-false-positive} pipeline: a class-level 1-sparse consistency check, two slot-level decodings, and a final adjacency confirmation. \textbf{To obtain single-run high-probability coverage, we further instantiate $R=c_G\log n$ independent PK \emph{groups} per class (each probing a constant number of complementary buckets), which amplifies the per-layer hit rate from $\Theta(1/\log n)$ to $1-n^{-\Omega(1)}$ without changing the accounting.} A \emph{one-shot pairing} discipline and \emph{class-term triggering} yield a per-(layer,level) accounting bound of $O(m)$, while \emph{keep-coin} concentration ensures that each vertex retains only $O(d^+(x))$ keys with high probability. Consequently, the total running time is $O(m\log^2 n)$ and the peak space is $O(m\log n)$, both with high probability. The algorithm emits a succinct \emph{Seeds+Logs} artifact that enables a third party to replay all necessary checks and certify a NO-instance in $\tilde O(m\log n)$ time. We also prove a $\Theta(1/\log n)$ hit-rate lower bound for any \emph{single} PK family under a constant-probe local model (via Yao)—motivating the use of $\Theta(\log n)$ independent groups—and discuss why global algebraic convolutions would break near-linear accounting or run into fine-grained barriers. We outline measured paths toward Las Vegas $O(m\log n)$ and deterministic near-linear variants.
\end{abstract}


\section{Introduction}

\textbf{Triangle detection}—deciding whether a simple undirected graph $G=(V,E)$ contains a $3$-cycle—is a canonical primitive underlying subgraph mining, join processing, and transitivity analysis. In dense graphs, algebraic methods based on fast matrix multiplication (FMM) achieve $O(n^{\omega})$ time. In sparse graphs, combinatorial routines typically run in $O(m\,\alpha(G))$ time in terms of the arboricity $\alpha(G)$ (worst case $O(m^{3/2})$). Hybrid "algebraic-on-a-skeleton" schemes give the current general upper bound $O\!\big(m^{2\omega/(\omega+1)}\big)\approx O(m^{1.41})$, but they rely on global linear-algebraic structure. 
This paper asks a different question: \emph{Can we obtain a provably near-linear worst-case algorithm using only local, non-algebraic sketches, while ensuring zero false positives and an auditable NO answer?}

\paragraph{Our results at a glance.}
All logarithms are base~$2$. Failure probabilities are made explicit via a global budget; see Table~1 for parameters.
\begin{enumerate}
  \item \textbf{Local near-linear.} We give a local-sketching algorithm for worst-case sparse graphs with running time $O(m\log^{2}\!n)$ and space $O(m\log n)$. With parameters chosen as in Table~1, the overall failure probability is at most $n^{-c}$ for an arbitrary constant $c>0$.
  \item \textbf{Zero false positives.} A \emph{three-gate} pipeline—\emph{1-sparse (slot)} $\rightarrow$ \emph{1-sparse (class)} $\rightarrow$ \emph{adjacency confirmation}—guarantees that every reported triangle is genuine.
  \item \textbf{Single-run high probability (YES).} Beyond $I=\Theta(\log n)$ layers, we instantiate \emph{$R=\Theta(\log n)$ independent PK groups per class}, each probing a constant number of complementary buckets. This \emph{layers+groups} amplification yields \emph{single-run} success probability $1-n^{-\Omega(1)}$ for detecting any fixed triangle, without changing the near-linear accounting bounds.
  \item \textbf{Auditable NO.} We expose a succinct \emph{Seeds+Logs} interface: given the random seeds and a compact log of triggered checks, any third party can replay the \emph{should-check domain} and verify a NO answer deterministically in $\tilde O(m\log n)$ time.
\end{enumerate}

\paragraph{Informal main result.}
Fix constants $c_R,c_T,c_G,\kappa,c_k$ as specified in Table~\ref{tab:params} and let the algorithm use $I=c_R\log n$ layers, $R=c_G\log n$ independent PK \emph{groups} per class (each group with $T_{i,r}=c_T\log n$ PK buckets and a constant-size list of probed complementary pairs), prime modulus $P=n^{\kappa}$ for 1-sparse identities, and $k=c_k\log n$-wise independence for retention coins (or Poissonization). Then, with probability at least $1-n^{-c}$ over the internal randomness \emph{in a single run}, the algorithm (i) runs in $O(m\log^{2}\!n)$ time and $O(m\log n)$ space, (ii) has zero false positives, and (iii) outputs \emph{Seeds+Logs} that allow deterministic NO-verification in $\tilde O(m\log n)$ time. (For the formal statement, see Section~\ref{sec:main}.)

\paragraph{Positioning and scope.}
Our approach is strictly \emph{non-algebraic}: we neither invoke, nor reduce to, Boolean matrix multiplication. It is orthogonal to FMM-based frameworks and should not be read as challenging BMM barriers. Within our local, constant-probe accounting model, we also prove that a \emph{single PK family within one layer} cannot succeed with probability better than $\Theta(1/\log n)$ (see the single-family lower bound in Section~\ref{sec:hitting}), which motivates our \emph{layers+groups} design: $\Theta(\log n)$ layers and $\Theta(\log n)$ independent PK groups per class together yield single-run high-probability coverage, while preserving near-linear accounting.

\paragraph{Techniques in brief.}
Edges incident to a vertex are partitioned into \emph{prefix classes} using lightweight keys. Potential triangle witnesses are surfaced via deterministic \emph{pair-key alignment} that couples complementary classes exactly once per (layer, level, \emph{group}) (\emph{one-shot pairing}), eliminating duplication. A \emph{charging lemma} shows that, per layer and per level, every probe can be charged to either (i) the first activation of its class or (ii) a unique pairing event, yielding $O(m)$ total checks per layer/level and hence $O(m\log^{2}\!n)$ overall.\footnote{The per-class PK work is budgeted with a constant number of probed complementary pairs per group together with cross-group de-duplication, so that the per-(layer,level) accounting remains $O(m)$.} Variance is controlled by \emph{retention-coin concentration}: we ensure constant expected mass per class and show that each vertex retains only $O(d(x))$ relevant keys with high probability. Zero false positives follow from two degree-$\le 2$ 1-sparse identities (requiring only $2$-wise independence) followed by explicit adjacency confirmation. For \emph{auditable NO}, we formalize the \emph{should-check domain} $\mathcal{Q}$ and prove $\mathcal{Q}=$\emph{LogPairs}: the verifier rebuilds $\mathcal{Q}$ from Seeds and checks that the log covers it exactly, then replays the three-gate pipeline deterministically (see Sections~\ref{sec:algo} and~\ref{sec:cert}).

\paragraph{Artifacts.}
To support reproducibility, we provide a minimal anonymous artifact: a $\sim$200-line reference implementation for building sketches, answering queries, and running the verifier, together with a compact \emph{Seeds+Logs} example. These are sufficient to reproduce the checks reported by the algorithm without heavy dependencies.

\paragraph{Organization.}
Section~\ref{sec:prelim} formalizes the model (non-adaptivity, constant probes per class, one-shot pairing) and lists all parameters in Table~\ref{tab:params}. Section~\ref{sec:algo} describes the sketching pipeline and proves zero false positives. Section~\ref{sec:one-shot} establishes the one–shot pairing accounting and the near-linear per-(layer,level) bound, while Section~\ref{sec:hitting} analyzes single-layer hit rate and layers+groups amplification. Section~\ref{sec:cert} defines the \emph{Seeds+Logs} interface and the $\tilde O(m\log n)$ NO-verification. A short related-work section positions our approach, and the paper concludes with directions toward Las Vegas $O(m\log n)$ and determinism.


\section{Related Work}
\label{sec:related}

\paragraph{Dense algebraic and rectangular multiplication methods.}
Algebraic techniques based on fast matrix multiplication (FMM) reduce triangle detection/counting on dense graphs to (variants of) adjacency-matrix products, achieving $O(n^{\omega})$ time and improving in tandem with $\omega$~[1,5]. For non-sparse yet not fully dense regimes, the Alon--Yuster--Zwick (AYZ) framework partitions vertices by degree and applies \emph{rectangular multiplication} on a dense core to obtain
\[
T(m)\;=\;O\!\left(m^{\frac{2\omega}{\omega+1}}\right)\approx O(m^{1.41}),
\]
which remains among the best-known bounds on general graphs~[3,4,6]. These bounds rely on global algebraic structure and often come with sizable constants.

\paragraph{Sparse combinatorial methods and degeneracy ordering.}
On worst-case sparse graphs, combinatorial algorithms are competitive. Itai--Rodeh gave an $O(m^{3/2})$ baseline early on~[1]. Chiba--Nishizeki exploited \emph{arboricity/degeneracy} $\alpha(G)$ together with \emph{low-degree-first} intersections to list all triangles in $O(\alpha(G)\,m)$ time (hence $O(m^{3/2})$ in the worst case) and near-linear time on low-degeneracy families~[2]. Despite practical optimizations (sorting/hashing/bitsets), the $m^{3/2}$ worst-case barrier persists absent algebraic acceleration.

\paragraph{Hybrid bounds.}
Between the two extremes, "combinatorial pruning + algebraic acceleration" hybrids, systematized by AYZ, yield $O\!\left(m^{2\omega/(\omega+1)}\right)$ and extend to small $k$-cycles/cliques~[3,6]. Their further improvement appears tied to FMM progress, and engineering constants can be high; fine-grained reductions (below) suggest these bounds are plausibly tight for general graphs given current barriers.

\paragraph{Local 1-sparse verification and low-degree checks.}
A complementary line uses local fingerprints/polynomial checks: following Schwartz--Zippel style identity tests and Freivalds’ verification, nonzeroness of a low-degree polynomial or the presence of a 1-sparse signal can be detected with lightweight randomness~[7--9]. In streaming/distributed settings, related sketches underlie small-space retention/counting~[7]. In our pipeline, two degree-$\le 2$ 1-sparse tests suffice (hence $2$-wise independence), with an explicit adjacency confirmation step.

\paragraph{Verifiable computation and certifying graph algorithms.}
A "YES" certificate for triangles is trivial, while a succinct "NO" certificate is intricate. \emph{Certifying algorithms} advocate emitting machine-checkable evidence alongside outputs~[11]. In distributed theory, \emph{locally checkable proofs} (LCP; LOCAL+labels) map per-node certificate complexity for graph properties, and many global predicates need superlogarithmic labels~[10]. Our \emph{Seeds+Logs} interface adopts this ethos: seeds make randomness replayable; logs expose the triggered checks so a third party can deterministically re-verify a NO answer in $\tilde O(m\log n)$ time.

\paragraph{Fine-grained complexity and why we do not challenge BMM.}
Triangle detection and Boolean matrix multiplication (BMM) are subcubicly equivalent: any $O(n^{3-\varepsilon})$ improvement on one yields a corresponding improvement on the other via low-overhead reductions~[5]. Consequently, surpassing $n^{\omega}$ (or the AYZ $m$-dependent bound) on \emph{general} graphs would indirectly advance matrix multiplication. Our work is \emph{non-algebraic and local} by design and does not claim to beat AYZ/FMM in their domains; instead, it targets worst-case sparse graphs with near-linear \emph{local accounting}, zero false positives, and an auditable NO-certificate.

\begin{table}[t]
\centering
\caption{Landscape positioning (conceptual axes). "Auditable accounting" refers to constant-probe, one-shot pairing with explicit charging bounds; "Verifiable NO" means a succinct artifact that a third party can replay deterministically.}
\label{tab:landscape}
\begin{tabularx}{\textwidth}{lCCC} 
\toprule
\textbf{Axis} & \textbf{Dense algebraic (FMM)} & \textbf{AYZ-style hybrid} & \textbf{This work (Local+Certifying)} \\
\midrule
Global algebraic machinery & Yes~[1,5] & Yes (on dense core)~[3,4,6] & No (local sketches) \\
Auditable accounting \\(one-shot/constant-probe) & No & Partial (via pruning) & Yes (explicit charging) \\
Verifiable NO-certificate & No & No & Yes (Seeds+Logs) \\
\bottomrule
\end{tabularx}
\end{table}

\noindent\emph{Scope note.} FMM and AYZ give the strongest known worst-case \emph{general-graph} upper bounds~[1,3,4,5,6]. Our guarantees are \emph{local and non-algebraic}, emphasizing zero false positives and auditability rather than improving the dense or fully general asymptotics.


\section{Model and Preliminaries}
\label{sec:prelim}

\noindent\textbf{Scope of this section.}
We fix the computational and probabilistic model, the randomness interface (\emph{Seeds}), and the local data structures used throughout. Definitions are stated so they can be \emph{replayed} by a third party from \emph{Seeds} alone. Unless noted otherwise, all statements hold \emph{w.h.p.} with parameters chosen from Table~1 \emph{(failure $\le n^{-c}$)}.

\paragraph{Graphs and orientation.}
We work on a simple undirected graph $G=(V,E)$ with $|V|=n$, $|E|=m$.  Fix a deterministic total order $\prec$ on $V$ (e.g., by $(d(\cdot),\mathrm{ID}(\cdot))$).  Each undirected edge $\{x,y\}$ is oriented $x\!\to\!y$ iff $x\prec y$.  Let $N(x)$ be the neighbor set, $d(x)=|N(x)|$, $N^+(x)=\{y\in N(x):x\prec y\}$, and $d^+(x)=|N^+(x)|$.  We freely write a directed edge as $e=(x\!\to\!y)$ with \emph{anchor} $x$ and \emph{mate} $y$.

\paragraph{Word-RAM and field arithmetic.}
We assume a Word-RAM with word size $\Theta(\log n)$.  All arithmetic is over a prime field $\mathbb F_P$ with $P=n^{\kappa}$ (Table~1), so additions/multiplications and modular inverses fit in $O(1)$ words.

\subsection*{Randomness interface (\emph{Seeds}) and independence levels}

\noindent\begin{minipage}{\linewidth}
\begin{center}
\fbox{\parbox{0.96\linewidth}{
\textbf{Non-adaptivity.} For each layer $i$ and each PK \emph{group} $t$, all random primitives listed below are sampled \emph{once} before scanning $E$, and processing does not branch on observed outcomes within that $(i,t)$.
}}
\end{center}
\end{minipage}

\smallskip
\noindent\textbf{Hash families and coins.}
We use the following layer-/group-indexed primitives; their concrete seeds are part of \emph{Seeds}.
\begin{itemize}\itemsep 2pt
  \item \emph{ID hash} $h_{\mathrm{id}}:V\!\to\!\mathbb F_P^{\!*}$ and \emph{sign hash} $s:\mathcal U\!\to\!\{-1,+1\}$ on the relevant universe $\mathcal U$ (edges/slots/classes as specified). Both are \emph{2-wise independent}.  
  \item \emph{Slot hash} $h_{\mathrm{slot}}(x,y,i)$ assigns $(x\!\to\!y)$ at layer $i$ to a slot in $[M_x]$ (defined below).  
  \item \emph{Base keys} $H_i:V\!\to\!\mathbb F_P$ and \emph{prefix keys} $K_i:V\!\to\!\{0,1\}^{L^\star}$ are 2-wise independent; $\mathrm{pref}_r(K_i(\cdot))$ denotes the length-$r$ prefix.
  \item \emph{Retention coins} $c_i(e)\in\{0,1\}$ with $\Pr[c_i(e)=1]=p_i$.  We support either \textbf{(A) Poissonization} (fully independent Bernoulli) or \textbf{(B) $k$-wise independence} with $k=c_k\log n$; the choice (and $c_k$) is fixed in Table~1.
  \item \emph{Per-class PK groups.} For each $(i,r)$ we instantiate $R=c_G\log n$ independent PK \emph{groups} indexed by $t\in[R]$. Group $t$ uses an independent bucket family $g_{i,r}^{(t)}:\mathbb F_P\!\to\![T_{i,r}]$ with $T_{i,r}=c_T\log n$ buckets, and the complementary map $j^{\star}\equiv(-j)\bmod T_{i,r}$. Each group has its own keyed PRF for selecting a fixed constant list of complementary bucket pairs to probe.
\end{itemize}

\subsection*{Local 1-sparse predicates (word- and class-level)}
For a multiset of items $\{z\}$ with signed weights $v(z)\in\{-1,+1\}$ and identifiers $\mathrm{id}(z)\in\mathbb F_P^{\!*}$, define
\[
A=\sum_z v(z),\qquad B=\sum_z v(z)\,\mathrm{id}(z),\qquad C=\sum_z v(z)\,\mathrm{id}(z)^2\;\in\;\mathbb F_P.
\]
The \emph{1-sparse consistency test} is $B^2=AC$; if it holds and $A\ne0$, the unique identifier decodes to $\widehat{\mathrm{id}}=B/A$.  Because $B^2-AC$ is a degree-$\le2$ polynomial in the hashed IDs, \emph{2-wise independence suffices} for soundness (false positives occur with probability $O(1/P)$ per test).  
We maintain such triples at two granularities:
\begin{itemize}\itemsep 2pt
  \item \textbf{Slot triple} $(A_s,B_s,C_s)$ for each slot $s\in [M_x]$ at anchor $x$.
  \item \textbf{Class triple} $(\Sigma_0,\Sigma_1,\Sigma_2)=\big(\sum_s A_s,\sum_s B_s,\sum_s C_s\big)$ aggregating all slots in a class.
\end{itemize}

\subsection*{LCRK configuration: Layers, Classes, Rates, and Keys}
We process $I=c_R\log n$ layers (Table~1).  Layer $i$ uses a single retention rate $p_i$ with $\sum_{i=1}^I p_i=O(1)$.

\paragraph{Slots per anchor.}
Each anchor $x$ owns $M_x:=16\,d(x)$ slots, indexed by $[M_x]$.  A retained edge $e=(x\!\to\!y)$ contributes to slot $h_{\mathrm{slot}}(x,y,i)$ with value $v(e)=s(x,y,i)$ and identifier $\mathrm{id}(e)=h_{\mathrm{id}}(y)$.

\paragraph{Prefix classes.}
For level $r\in\{0,1,\dots,L_{x,i}\}$ with $L_{x,i}=\lceil\log_2\max\{1,\Theta(d^+(x)p_i)\}\rceil$, define the \emph{anchor-class}
\[
(x,i,r,b)\quad\text{for each }b\in\{0,1\}^r,
\]
containing all retained $e=(x\!\to\!y)$ such that $\mathrm{pref}_r(K_i(x))=\mathrm{pref}_r(K_i(y))=b$.  All edges in a class contribute to its class triple $(\Sigma_0,\Sigma_1,\Sigma_2)$.

\paragraph{Pair-key (PK) offsets and alignment.}
For $e=(u\!\to\!v)$ in class $(u,i,r,b)$ define the \emph{pair-key offset}
\[
\Delta_i(u\!\to\!v,r,b)\ :=\ H_i(v)-H_i(u)\ \ (\bmod\ P).
\]
Two edges in the same class \emph{PK-align} if their offsets sum to zero modulo $P$.  We do not scan all offsets: for each \emph{group} $t\in[R]$ we form PK-buckets via $j^{(t)}=g_{i,r}^{(t)}(\Delta_i(\cdot))\in[T_{i,r}]$ and probe only \emph{complementary pairs} $(j^{(t)},(j^{(t)})^\star)$ selected by the group’s keyed schedule.

\noindent\begin{minipage}{\linewidth}
\begin{center}
\fbox{\parbox{0.96\linewidth}{
\textbf{Constant-probe \& one-shot pairing (with PK groups).}
For each non-empty class $(u,i,r,b)$ and each group $t\in[R]$, we probe at most a fixed constant $C_0$ complementary PK-bucket pairs $(j^{(t)},(j^{(t)})^\star)$ (indices derived from \emph{Seeds} and $b$). Within and across groups we enforce \emph{one-shot} matching: an edge participates in at most one successful pairing inside the class. Thus the total number of \emph{executed} checks at level $(i,r)$ obeys
\[
Q_{i,r}\ \le\ C\,|\mathcal S_{i,r}|\qquad\text{w.h.p.,}
\]
where $\mathcal S_{i,r}$ is the set of non-empty classes at $(i,r)$; the potential \emph{probed} pairs per class are $O(R)$ but cross-group de-dup keeps executed checks near-linear (see Lemma~\ref{lem:one-shot}).
}}
\end{center}
\end{minipage}

\subsection*{Zero-FP pipeline (filter $\to$ align $\to$ confirm)}
When a probed pair of complementary buckets $(j^{(t)},(j^{(t)})^\star)$ in class $(u,i,r,b)$ is non-empty, we perform:
\begin{enumerate}\itemsep 2pt
  \item \textbf{Gate 1 (slot 1-sparse).} Each touched slot must satisfy $B_s^2=A_sC_s$; otherwise abort the event.
  \item \textbf{Gate 2 (class 1-sparse).} The class triple must satisfy $\Sigma_1^2=\Sigma_0\Sigma_2$; if so, decode $\widehat{\mathrm{id}}=\Sigma_1/\Sigma_0$.
  \item \textbf{Adjacency confirmation.} Using the oriented adjacency dictionary, test whether the mates $v,w$ implied by the matched bins are adjacent.  Only then \emph{report} the triangle $u\!-\!v\!-\!w$.
\end{enumerate}
This yields \emph{zero false positives}: a reported triangle necessarily exists (the two algebraic gates are sound under 2-wise independence, and we confirm adjacency explicitly).

\subsection*{Should-check domain $\mathcal Q$ (for NO-certificates)}
Given \emph{Seeds} and $G$, the algorithm deterministically induces a set of \emph{adjacency checks that should be performed}.  We formalize this as a replayable domain.

\begin{definition}[Should-check domain]
\label{def:qdomain}
Let $\mathcal C$ be the set of all non-empty classes $(u,i,r,b)$.  For each $C\in\mathcal C$:
\begin{itemize}\itemsep 1pt
  \item Let $\mathcal E_C$ be the retained edges in $C$, and for each group $t\in[R]$ map each edge to a PK-bucket $j^{(t)}=g_{i,r}^{(t)}(\Delta_i(\cdot))$.
  \item Fix a canonical within-class processing order (by $(t,j^{(t)},\text{slot},\mathrm{id})$). Apply the \emph{one-shot} rule: whenever the first item arrives to a bucket whose complement in the same group already holds an unmatched item, pair them and mark both as used.
  \item For every such canonical pair, form the unordered vertex pair $\{v,w\}$ of their mates.
\end{itemize}
The \emph{should-check domain} is the multiset
\[
\mathcal Q\ :=\ \biguplus_{C\in\mathcal C}\ \Big\{\,(u,i,r,b;\ \{v,w\})\text{ formed as above across groups}\,\Big\}.
\]
\end{definition}

\noindent Intuitively, $\mathcal Q$ is the set of candidate adjacencies that the algorithm \emph{must} check given \emph{Seeds}.  In Section~\ref{sec:cert} we will specify a succinct \emph{Logs} format and prove that, on NO-instances, \emph{Logs} \emph{cover} $\mathcal Q$ iff every element of $\mathcal Q$ fails the adjacency test; the verifier reconstructs $\mathcal Q$ from \emph{Seeds} and $G$ in $\tilde O(m\log n)$ time.

\subsection*{Load and retention concentration}
Let $R_{x,i}=\sum_{y\in N^+(x)} c_i(x\!\to\!y)$ be the number of retained out-edges of anchor $x$ at layer $i$.  Under either \textbf{(A)} Poissonization or \textbf{(B)} $k$-wise independence with $k=c_k\log n$,
\[
\mathbb E[R_{x,i}] = d^+(x)p_i,\qquad
R_{x,i}=O\!\big(d^+(x)p_i+\log n\big)\ \text{ w.h.p.}
\]
With $\sum_i p_i=O(1)$ and $I=c_R\log n$ layers, $\sum_i R_{x,i}=O(d^+(x)+\log n)$ w.h.p. for all $x\in V$.  Combined with constant-probe and one-shot pairing (including cross-group de-dup), this underpins the $O(m)$ accounting per $(i,r)$ level and the global $O(m\log^2\!n)$ time bound proved later.

\bigskip
\noindent\textbf{Remark on failure budgets.} Every algebraic gate is a 2-wise test with per-event error $\le 1/P$; the total number of such events is bounded by $O(m\log^2\!n)$ (proved later).  Choosing $P=n^{\kappa}$ and $k=c_k\log n$ as in Table~1 makes the union-bound failure probability $\le n^{-c}$, as stated in Theorem~1.\footnote{Table~1 adds the parameter $c_G$ with $R=c_G\log n$; PK groups are independent across $t$ and independent of layers $i$.}

\begin{table}[t]
\centering
\footnotesize
\caption{Parameters and default constants (all logs base~2).}
\begin{tabular}{l l l l}
\toprule
\textbf{Symbol} & \textbf{Meaning} & \textbf{Default} & \textbf{Notes} \\
\midrule
\(c_M\) & slot multiplier & \(16\) & \(M_x=c_M d(x)\) \\
\(c_B\) & per-anchor key budget & \(8\) & \(B_{x,i}=\lceil c_B d(x)p_i\rceil\) \\
\(c_T\) & PK buckets per class & \(16\) & \(T_{i,r}=c_T \log n\) \\
\(c_R\) & layers constant & \(8\) & \(I=c_R \log n\) \\
\(c_G\) & \textbf{PK groups per class} & \(8\) & \(\mathbf{R=c_G \log n}\) \\
\(c_k\) & independence level & \(12\) & \(k=c_k \log n\) (coins/PK) \\
\(\kappa\) & field exponent & \(\ge c+5\) & \(P=n^{\kappa}\) (one-sided errors) \\
\midrule
\multicolumn{4}{p{0.92\linewidth}}{\emph{Notes:} Groups are independent across \(t\) and independent of layers/prefixes; the group amplification by \(R=\Theta(\log n)\) is used to obtain single-run YES coverage w.h.p. The algebraic budget scales by an extra \(R\) factor through grouped bin tests, which is absorbed by choosing \(\kappa\ge c+5\).} \\
\bottomrule
\end{tabular}
\label{tab:params}
\end{table}


\section{Algorithm}
\label{sec:algo}

\noindent\textbf{Section goal.} We give the executable specification that the auditor can replay from \emph{Seeds}.  The three routines are:
\begin{itemize}\itemsep 2pt
  \item \textsc{Build\_Sketches} — one streaming pass that fills slot-level (1-sparse) triples and, for a \emph{constant} set of complementary PK-bucket pairs \emph{per group} and per class, maintains \emph{bin-level} (1-sparse) triples and registers one-shot PK-collisions.
  \item \textsc{Query\_Triangle} — enumerates only registered PK-collisions and runs the three-gate pipeline: \emph{(bin 1-sparse)} $\rightarrow$ \emph{(bin 1-sparse)} $\rightarrow$ \emph{(adjacency)}.
  \item \textsc{Emit\_Certificate} and \textsc{Verify\_NO} — serialize \emph{Seeds+Logs} and a deterministic verifier that reconstructs the \emph{should-check domain} $\mathcal Q$ and checks \emph{Logs} cover $\mathcal Q$ exactly.
\end{itemize}
All statements use parameters from Table~1 \emph{(failure $\le n^{-c}$)}.

\subsection*{Deterministic probe schedule (constant per \emph{group}; $O(R)$ per class)}
For each non-empty class $C=(i,u,r,b)$ and each PK \emph{group} $t\in[R]$ we predefine, from \emph{Seeds} and $(i,u,r,b,t)$ alone, a list of $C_0$ complementary bucket pairs:
\[
\mathrm{ProbedPairs}(i,u,r,b,t)\;=\;\Big\{(j^{(t,s)},\,{j^{(t,s)}}^\star)\ :\ s=1,\dots,C_0\Big\},\qquad C_0=\Theta(1),
\]
where $j^{(t,s)}\!=\!\phi(i,u,r,b,t,s)\in[T_{i,r}]$ is a layer-/class-/group-keyed PRF and $j^\star\equiv(-j)\bmod T_{i,r}$. Only these pairs are \emph{ever} examined and logged. The \emph{total} number of probed pairs per class is
\[
C_0\cdot R\;=\;O(R)\;=\;O(\log n).
\]

\subsection*{Data layout (per layer $i$)}
For every anchor $x$:
\begin{itemize}\itemsep 2pt
  \item \textbf{Slots.} $M_x=16\,d(x)$ slots; each slot $s$ stores a triple $(A_s,B_s,C_s)\in\mathbb F_P^3$.
  \item \textbf{Classes.} For each level $r\in\{0,\dots,L_{x,i}\}$ and bitstring $b\in\{0,1\}^r$, class $C=(i,x,r,b)$ keeps, \emph{for every group $t\in[R]$}:
  \begin{itemize}\itemsep 1pt
    \item a \emph{bin table} indexed by $j\in\bigcup \mathrm{ProbedPairs}(i,x,r,b,t)$; each bin $j$ stores a triple $(\Xi_{0,j}^{(t)},\Xi_{1,j}^{(t)},\Xi_{2,j}^{(t)})$ (the sum of member slots) and one \emph{witness} edge $(x\!\to\!y,\mathrm{sidx})$;
    \item a \emph{collision registry} $\mathrm{Collisions}[i,x,r,b,t]$ of matched complementary bins $(j,j^\star)$ (set once, by one-shot pairing within group $t$).
  \end{itemize}
  \item \textbf{Active-bit.} $\mathrm{active}[i,x,r,b]\in\{0,1\}$ is set on first materialization into $C$.
\end{itemize}
Global read-only oracles (from \emph{Seeds}): $H_i(\cdot)$, $K_i(\cdot)$, $h_{\mathrm{id}}(\cdot)$, $h_{\mathrm{slot}}(\cdot)$, $s(\cdot)$, retention coins $c_i(\cdot)$, and the PRF $\phi$. Each group $t$ uses its own independent bucket family $g^{(t)}_{i,r}:\mathbb F_P\!\to\![T_{i,r}]$.

\subsection*{\textsc{Build\_Sketches} (single pass; non-adaptive)}

\begin{algorithm}[t]
\caption{\textsc{Build\_Sketches}}
\label{alg:build}
\begin{algorithmic}[1]
\Require Oriented edges $E$, layers $i=1..I$, \emph{Seeds}
\Ensure Slot triples, per-class/per-group bin tables, and per-group collision registries

\For{$i=1$ \To $I$}
  \ForAll{$x \in V$}
    \State Initialize $\textsc{Slots}[x][0..M_x-1] \gets (0,0,0)$
    \For{$r=0$ \To $L_{x,i}$}
      \ForAll{$b \in \{0,1\}^{r}$}
        \State $\mathrm{active}[i,x,r,b] \gets 0$
        \For{$t=1$ \To $R$}
          \ForAll{$(j,j^\star) \in \mathrm{ProbedPairs}(i,x,r,b,t)$}
            \State $\textsc{Bin}[i,x,r,b,t][j].\textsf{triple} \gets (0,0,0)$;
                   $\textsc{Bin}[i,x,r,b,t][j].\textsf{witness} \gets \varnothing$
            \State $\textsc{Bin}[i,x,r,b,t][j^\star].\textsf{triple} \gets (0,0,0)$;
                   $\textsc{Bin}[i,x,r,b,t][j^\star].\textsf{witness} \gets \varnothing$
          \EndFor
          \State $\textsc{Collisions}[i,x,r,b,t] \gets \varnothing$
        \EndFor
      \EndFor
    \EndFor
  \EndFor
\EndFor

\ForAll{$e=(x\!\to\!y) \in E$} \Comment{single streaming scan}
  \For{$i=1$ \To $I$}
    \If{$c_i(e)=0$}
      \State \textbf{continue}
    \EndIf
    \State $sidx \gets h_{\mathrm{slot}}(x,y,i)\bmod M_x$;\quad $id \gets h_{\mathrm{id}}(y)$;\quad $sgn \gets s(x,y,i)$
    \State $(A,B,C) \gets \textsc{Slots}[x][sidx]$
    \State $(A,B,C) \gets (A{+}sgn,\ B{+}sgn\cdot id,\ C{+}sgn\cdot id^2)$
    \State $\textsc{Slots}[x][sidx] \gets (A,B,C)$
    \If{$\neg(B^2=AC)$ \textbf{ or } $A=0$ \textbf{ or } $B/A \ne id$}
      \State \textbf{continue}
    \EndIf
    \For{$r=0$ \To $L_{x,i}$}
      \If{$\mathrm{pref}_r(K_i(x)) \ne \mathrm{pref}_r(K_i(y))$}
        \State \textbf{continue}
      \EndIf
      \State $b \gets \mathrm{pref}_r(K_i(x))$;\quad $\mathrm{active}[i,x,r,b] \gets 1$
      \State $\Delta \gets ( H_i(y) - H_i(x) ) \bmod P$
      \For{$t=1$ \To $R$}
        \State $j \gets g^{(t)}_{i,r}(\Delta)$;\quad $j^\star \gets (-j)\bmod T_{i,r}$
        \If{$(j,j^\star)\notin \mathrm{ProbedPairs}(i,x,r,b,t)$}
          \State \textbf{continue}
        \EndIf
        \State $(\Xi_0,\Xi_1,\Xi_2) \gets \textsc{Bin}[i,x,r,b,t][j].\textsf{triple}$
        \State $\textsc{Bin}[i,x,r,b,t][j].\textsf{triple} \gets (\Xi_0{+}A,\Xi_1{+}B,\Xi_2{+}C)$
        \If{$\textsc{Bin}[i,x,r,b,t][j].\textsf{witness}=\varnothing$}
          \State $\textsc{Bin}[i,x,r,b,t][j].\textsf{witness} \gets (x\!\to\!y, sidx)$
        \EndIf
        \If{$\textsc{Bin}[i,x,r,b,t][j^\star].\textsf{witness}\ne\varnothing$ \textbf{and}
            $(j,j^\star)\notin \textsc{Collisions}[i,x,r,b,t]$}
          \State $\textsc{Collisions}[i,x,r,b,t]\gets \textsc{Collisions}[i,x,r,b,t]\cup\{(j,j^\star)\}$
        \EndIf
      \EndFor
    \EndFor
  \EndFor
\EndFor
\end{algorithmic}
\end{algorithm}

\paragraph{Why this enforces zero-FP propagation.}
Only slot-1-sparse items (already decoded to the true $h_{\mathrm{id}}(y)$) are ever inserted into bins; bin triples therefore aggregate \emph{only} genuine unit-coded items. Any subsequent algebraic acceptance (bin 1-sparse) plus explicit adjacency confirmation cannot fabricate triangles.

\subsection*{\textsc{Query\_Triangle} (three gates; zero false positives)}

\begin{algorithm}[t]
\caption{\textsc{Query\_Triangle}}
\label{alg:query}
\begin{algorithmic}[1]
\Require Per-class/per-group bins, collision registries, \textsc{Slots}, adjacency oracle for $E$
\Ensure Either a triangle $(u,v,w)$ or \textsc{NO}

\For{$i=1$ \To $I$}
  \ForAll{$x \in V$}
    \For{$r=0$ \To $L_{x,i}$}
      \ForAll{$b$ with $\mathrm{active}[i,x,r,b]=1$}
        \For{$t=1$ \To $R$}
          \ForAll{$(j,j^\star) \in \textsc{Collisions}[i,x,r,b,t]$}
            \State $(\Xi_0,\Xi_1,\Xi_2) \gets \textsc{Bin}[i,x,r,b,t][j].\textsf{triple}$;\ 
                   $(x\!\to\!v,\ sidx_v) \gets \textsc{Bin}[i,x,r,b,t][j].\textsf{witness}$
            \If{$\neg(\Xi_1^2 = \Xi_0\Xi_2) \ \textbf{or}\ \Xi_0=0$}
              \State \textbf{continue}
            \EndIf
            \If{$\Xi_1/\Xi_0 \ne h_{\mathrm{id}}(v)$}
              \State \textbf{continue}
            \EndIf

            \State $(X_0,X_1,X_2) \gets \textsc{Bin}[i,x,r,b,t][j^\star].\textsf{triple}$;\ 
                   $(x\!\to\!w,\ sidx_w) \gets \textsc{Bin}[i,x,r,b,t][j^\star].\textsf{witness}$
            \If{$\neg(X_1^2 = X_0X_2) \ \textbf{or}\ X_0=0$}
              \State \textbf{continue}
            \EndIf
            \If{$X_1/X_0 \ne h_{\mathrm{id}}(w)$}
              \State \textbf{continue}
            \EndIf

            \State $(A_v,B_v,C_v)\gets \textsc{Slots}[x][sidx_v]$;\quad
                   $(A_w,B_w,C_w)\gets \textsc{Slots}[x][sidx_w]$
            \If{$\neg(B_v^2=A_vC_v)\ \textbf{or}\ A_v=0\ \textbf{or}\ B_v/A_v \ne h_{\mathrm{id}}(v)$}
              \State \textbf{continue}
            \EndIf
            \If{$\neg(B_w^2=A_wC_w)\ \textbf{or}\ A_w=0\ \textbf{or}\ B_w/A_w \ne h_{\mathrm{id}}(w)$}
              \State \textbf{continue}
            \EndIf

            \If{$\{v,w\}\in E$}
              \State \Return Triangle $(x,v,w)$
            \EndIf
          \EndFor
        \EndFor
      \EndFor
    \EndFor
  \EndFor
\EndFor

\State \Return \textsc{NO}
\end{algorithmic}
\end{algorithm}

\paragraph{Accounting summary.}
Per class we probe exactly $C_0$ complementary pairs \emph{per group}; across $R=c_G\log n$ groups this is $O(R)=O(\log n)$ per class. With retention concentration (Sec.~\ref{sec:prelim}) and one-shot pairing, this yields $O(m)$ work per $(i,r)$, hence total $O(m\log^2\!n)$ time and $O(m\log n)$ space.


\section{Correctness: Zero False Positives}
\label{sec:correctness}

We prove that the three-gate pipeline (bin 1-sparse $\rightarrow$ bin 1-sparse $\rightarrow$ adjacency) never outputs a non-existent triangle. Throughout, all arithmetic is over the prime field $\mathbb F_P$ with $P=n^\kappa$ chosen per Table~1, and all randomness is \emph{non-adaptive} (fixed by \emph{Seeds}). Hashes used for 1-sparse predicates are 2-wise independent, which suffices because our consistency predicates are degree-$\le 2$ polynomials.

\paragraph{Setup and notation.}
Fix a layer/anchor/class $(i,u,r,b)$. For a PK-bucket index $j$ (and its complement $j^\star\equiv -j\bmod T_{i,r}$), let $\mathcal B_{j}$ be the (materialized) multiset of directed edges $(u\!\to\!w)$ whose slot is individually 1-sparse and whose pair-key hashes to $j$. Each such edge contributes a signed unit triple
\[
(A,B,C)=(v,\,v\,x,\,v\,x^2)\in\mathbb F_P^3,\qquad v\in\{-1,+1\},\ \ x=h_{\mathrm{id}}(w)\in\mathbb F_P^*.
\]
The bin-triple stored for $j$ is $(\Xi_{0,j},\Xi_{1,j},\Xi_{2,j})=\sum_{(A,B,C)\in\mathcal B_j}(A,B,C)$, and similarly for $j^\star$. By construction, each bin also stores a \emph{witness} edge $(u\!\to\!v,sidx_v)$ for the first item inserted into the bin (Algorithm~\ref{sec:algo}). All gates and tests below use parameters from Table~1 \emph{(failure $\le n^{-c}$)}.

\paragraph{\textbf{Lemma 5.1 (Bin 1-sparse soundness; degree-2).}}
If $|\mathcal B_j|=1$ then $\Xi_{1,j}^2=\Xi_{0,j}\Xi_{2,j}$ holds identically and $\Xi_{1,j}/\Xi_{0,j}=h_{\mathrm{id}}(v)$ for that unique item. If $|\mathcal B_j|\ge 2$ and at least two distinct identifiers appear in $\mathcal B_j$, then
\[
\Pr\big[\,\Xi_{1,j}^2=\Xi_{0,j}\Xi_{2,j}\,\big]\ \le\ 1/P,
\]
where the probability is over the 2-wise independent choice of the identifiers $\{h_{\mathrm{id}}(\cdot)\}$.

\emph{Proof.}
Let $\{(a_t,x_t)\}_t$ be the signed weights/IDs in $\mathcal B_j$. A direct expansion yields
\[
\Xi_{0,j}\Xi_{2,j}-\Xi_{1,j}^2
\;=\;\sum_{t<t'} a_t a_{t'}\,(x_t-x_{t'})^2.
\]
If two IDs differ, the right-hand side is a nonzero polynomial of total degree $2$ in the $\{x_t\}$’s, which vanishes with probability at most $1/P$ under 2-wise independence. The singleton case is tautological. \hfill$\square$

\paragraph{\textbf{Lemma 5.2 (Complementary bins and witnesses).}}
Suppose the PK map for class $(i,u,r,b)$ registers a collision between complementary bins $(j,j^\star)$ with witnesses $(u\!\to\!v,sidx_v)$ and $(u\!\to\!w,sidx_w)$. If both bins are individually 1-sparse and decode to their witnesses,
\[
\Xi_{1,j}^2=\Xi_{0,j}\Xi_{2,j}\neq 0,\quad \Xi_{1,j}/\Xi_{0,j}=h_{\mathrm{id}}(v),
\qquad
\Xi_{1,j^\star}^2=\Xi_{0,j^\star}\Xi_{2,j^\star}\neq 0,\quad \Xi_{1,j^\star}/\Xi_{0,j^\star}=h_{\mathrm{id}}(w),
\]
then the two oriented edges $(u,v)$ and $(u,w)$ are present in $E$, with mates identified as $v$ and $w$ (no aliasing).

\emph{Proof.}
Only slot-1-sparse items (already decoded to their true IDs) are materialized into bins. Passing the bin 1-sparse test forces the bin to contain a unique identifier and the decode equals the bin’s witness ID; hence the witness edges $(u\!\to\!v)$ and $(u\!\to\!w)$ are indeed present. \hfill$\square$

\paragraph{\textbf{Lemma 5.3 (Adjacency closure).}}
Under the premises of Lemma~5.2, if moreover $\{v,w\}\in E$, then $\{u,v,w\}$ induces a (simple) triangle in $G$.

\emph{Proof.}
Edges $(u,v)$ and $(u,w)$ exist by Lemma~5.2, and $\{v,w\}\in E$ by the explicit adjacency query. Thus $u\!\!-\!\!v\!\!-\!\!w\!\!-\!\!u$ is a 3-cycle. \hfill$\square$

\paragraph{\textbf{Theorem 5.4 (Zero false positives).}}
Every triangle output by \textsc{Query\_Triangle} exists in $G$; the statement is deterministic (independent of \emph{Seeds}).

\emph{Proof.}
\textsc{Query\_Triangle} reports only after both complementary bins pass Lemma~5.1 and the adjacency test returns true; Lemmas~5.2–5.3 then imply the reported triple forms a triangle. The act of reporting depends solely on $\mathbb F_P$ equalities and a graph-membership query; randomness affects \emph{which} bins collide, not the truth of the reported event. \hfill$\square$

\paragraph{Failure budget (inline).}
The only one-sided error stems from accepting a mixed bin in Lemma~5.1; per event this happens with probability $\le P^{-1}$. The number of gated events is $O(m\log^2 n)$ by the constant-probe schedule and one-shot pairing, so with $P=n^\kappa$ (Table~1) a union bound yields total failure $\le n^{-c}$. Note this affects \emph{work}, not soundness: even a rare false algebraic pass cannot create a false triangle without an actual adjacency edge.


\section{Workload Concentration via Keep-Coins}
\label{sec:concentration}

We quantify how the layerwise keep-coins restrict total retained work. Throughout this section all statements are \emph{with high probability}, made explicit as \emph{(parameters from Table~1, failure $\le n^{-c}$)}.

\paragraph{Schedule mass and coins.}
Layers are indexed by $i=1,2,\dots,I$ with per-layer retention rates $p_i\in(0,1)$ and \emph{schedule mass}
\[
S\ :=\ \sum_{i=1}^{I} p_i\ \le\ S_0\ ,
\]
for an absolute constant $S_0$ (e.g., the geometric choice $p_i=2^{-(i+2)}$ gives $S\le\frac12$). For every directed edge $e$, the layer-$i$ keep-coin $c_i(e)\in\{0,1\}$ has $\Pr[c_i(e)=1]=p_i$ and is sampled non-adaptively from the public \emph{Seeds}. We analyze the Poisson–binomial model (independent across $i$ per fixed $e$) and record limited-independence variants at the end.

\paragraph{Random variables.}
For an anchor $x$,
\[
R_{x,i}\ :=\ \sum_{y\in N^+(x)} c_i(x\!\to\!y),
\qquad
K_{\mathrm{tot}}(x)\ :=\ \sum_{i=1}^{I} R_{x,i}\ =\ \sum_{y\in N^+(x)}\sum_{i=1}^{I} c_i(x\!\to\!y).
\]
Hence $\mathbb E[R_{x,i}]=d^+(x)\,p_i$ and $\mathbb E[K_{\mathrm{tot}}(x)]=d^+(x)\,S$.

\paragraph{Deviation tool.}
For Poisson–binomial $X=\sum_j X_j$ with mean $\mu$ and variance $\sigma^2\le \mu$ we use Bernstein’s tail:
\begin{equation}
\Pr[X\ge \mu+t]\ \le\ \exp\!\left(-\frac{t^2}{2(\mu+t/3)}\right)\quad (t\ge 0).
\label{eq:bernstein}
\end{equation}

\paragraph{\textbf{Lemma 6.1 (Per-edge budget; }parameters from Table~1, failure $\le n^{-c}$\textbf{).}}
Let $K_e:=\sum_{i=1}^{I} c_i(e)$. Then
\[
\Pr\big[K_e\ \ge\ S+t\big]\ \le\ \exp\!\left(-\frac{t^2}{2(S+t/3)}\right).
\]
In particular, with $t=6\ln n$ we get $\Pr[K_e\ge S+6\ln n]\le n^{-6}$; a union bound over all $m$ edges yields $\Pr[\exists e:\ K_e\ge S+6\ln n]\le m\,n^{-6}\le n^{-4}$ for $m\le n^2$.

\emph{Proof.}
Apply~\eqref{eq:bernstein} to the Poisson–binomial with mean $S$ and variance $\le S$. \hfill$\square$

\paragraph{\textbf{Lemma 6.2 (Per-anchor total keeps; }parameters from Table~1, failure $\le n^{-c}$\textbf{).}}
There exists an absolute constant $C\ge 1$ such that, for all $x\in V$,
\[
\Pr\!\left[K_{\mathrm{tot}}(x)\ >\ C\,d^+(x)\ +\ 6\ln n\right]\ \le\ n^{-6}.
\]
Consequently, with probability at least $1-n^{-5}$ simultaneously for all $x$,
\[
K_{\mathrm{tot}}(x)\ \le\ C\,d^+(x)\ +\ 6\ln n.
\]

\emph{Proof.}
$K_{\mathrm{tot}}(x)$ is a Poisson–binomial with mean $\mu_x=d^+(x)S\le S_0 d^+(x)$ and variance $\le \mu_x$. Apply~\eqref{eq:bernstein} with $t=\mu_x$ (doubling the mean) and add $6\ln n$ slack for a union bound over $x$. Absorb constants into $C$. \hfill$\square$

\paragraph{\textbf{Corollary 6.3 (Active classes and space peak; }parameters from Table~1, failure $\le n^{-c}$\textbf{).}}
Let $A_{x,i}$ be the number of \emph{active} classes $(i,x,r,b)$ (i.e., those with $\mathrm{active}[i,x,r,b]=1$ after \textsc{Build\_Sketches}). Then
\[
\sum_{x\in V}\sum_{i=1}^{I} A_{x,i}\ =\ O\!\big(m+n\log n\big)\ =\ O(m\log n)\qquad\text{w.h.p.,}
\]
and the peak memory (slots $+$ classes $+$ PK bins) is of the same order.

\emph{Proof.}
Fix $(i,x)$. A retained edge $(x\!\to\!y)$ activates at most $1+\mathrm{LCP}(K_i(x),K_i(y))$ classes across prefix levels $r\ge 0$, where $\Pr[\mathrm{LCP}\ge t]=2^{-t}$. Thus $\mathbb E[1+\mathrm{LCP}]\le 2$ and, by Chernoff on sums of geometric tails,
\[
A_{x,i}\ \le\ 3\,R_{x,i}\ +\ O(\ln n)\quad\text{w.h.p.}
\]
Summing over $i$ and using Lemma~6.2,
\[
\sum_i A_{x,i}\ \le\ 3\,K_{\mathrm{tot}}(x)\ +\ O(\ln n)\ \le\ 3C\,d^+(x)\ +\ O(\ln n)\quad\text{w.h.p.}
\]
Summing over $x$ gives $O(m)+O(n\ln n)$. As $m+n\ln n=O(m\ln n)$ in general, we obtain the claim; each active class stores $O(1)$ words, so memory has the same order. \hfill$\square$

\paragraph{\textbf{Proposition 6.4 (Bridge to one-shot accounting; }parameters from Table~1, failure $\le n^{-c}$\textbf{).}}
Combine Cor.~6.3 with the \emph{one-shot pairing} invariant (each retained directed edge participates in at most one PK pairing within a class). Then per fixed $(i,r)$ the number of PK-collision events is $O(m)$ w.h.p., and consequently the total gated checks satisfy
\[
\sum_{i=1}^{I}\ \sum_{r=0}^{L_{\max}}\ Q_{i,r}\ =\ O\!\big(m\log^2 n\big)\quad\text{w.h.p.}
\]
(\emph{Proof sketch.} Each materialized edge contributes $O(1)$ expected classes; one-shot pairing upper-bounds collisions by the number of materialized edges per $(i,r)$. Summing over $r=O(\log n)$ levels and $i=O(\log n)$ layers yields the bound. A full proof appears with Lemma~B in Appendix~B.) \hfill$\square$

\paragraph{Failure budget (inline).}
The only randomized accept/reject decisions that can err one-sidedly are the degree-2 1-sparse equalities, each failing with probability $\le P^{-1}$; the keep-coins concentration itself is tail-probability based and budgeted across all $(x,i)$ via union bounds. With $P=n^\kappa$ and constants from Table~1, the sum of all such probabilities across all layers/levels/classes is $\le n^{-c}$.

\paragraph{Discussion: Poissonization and limited independence.}
\begin{itemize}\itemsep 2pt
  \item \textbf{Poissonization.} One can first draw $N_{x,i}\sim\mathrm{Poisson}(\lambda_{x,i}=d^+(x)p_i)$ independently and then sample $N_{x,i}$ uniform out-edges of $x$ in layer $i$. Standard Chernoff bounds apply to $\sum_i N_{x,i}$; a standard de-Poissonization (e.g., thinning/coupling) transfers tails to fixed-size Bernoulli keeps with constant-factor slack.
  \item \textbf{Limited independence.} If full independence among $\{c_i(e)\}$ is undesirable, $k$-wise independence with $k=\Theta(\log n)$ suffices: Chernoff-type bounds continue to hold up to deviations $t=O(\mu)$ with additive $O(\log n)$ terms, which our budget already absorbs. Table~1 lists $k=c_k\log n$.
\end{itemize}


\section{Hitting Probability and Amplification}
\label{sec:hitting}

We quantify the probability that a \emph{fixed} triangle is exposed at a canonical level within a layer, and then show that \emph{per-class group amplification} (across $R=c_G\log n$ independent PK groups) boosts this to a high-probability guarantee \emph{within a single run}. Throughout this section, failure probabilities are made explicit as \emph{(parameters from Table~1, failure $\le n^{-c}$)} and all randomness is over the non-adaptive \emph{Seeds}.

\paragraph{Setup for a fixed triangle.}
Fix an undirected triangle $\{u,v,w\}$ and the orientation induced by the total order: $u\!\to\!v$, $u\!\to\!w$, $v\!\to\!w$.  Let $u$ be the \emph{anchor} (ties arbitrary).  For layer $i$, write
\[
\mu\ :=\ \mu(u,i)\ :=\ \mathbb E[R_{u,i}]\ =\ d^+(u)\,p_i,
\]
where $p_i$ is the layer-$i$ keep-rate and $R_{u,i}=\sum_{y\in N^+(u)} c_i(u\!\to\!y)$.  Recall $M_u=16\,d(u)$ slots at $u$ and that in class $(i,u,r,b)$ the PK structure uses $T_{i,r}=c_T\log n$ buckets (Table~1); complementary buckets are paired by $j\mapsto j^\star\equiv(-j)\bmod T_{i,r}$. Under scheme~B, each class $(i,u,r,b)$ also instantiates $R=c_G\log n$ \emph{independent} PK groups $t\in[R]$, each with its own bucket family $g^{(t)}_{i,r}$ and a \emph{constant} probe list $\mathrm{ProbedPairs}(i,u,r,b,t)$ of size $C_0=\Theta(1)$.

\paragraph{A canonical level.}
Let
\[
r^\star\ :=\ \max\{\,r\ge 0:\ \mu/2^{r}\ \ge\ 1\,\}\quad(\text{take }r^\star=0\text{ if }\mu<1),
\]
so that $\mu/2^{r^\star}\in[1/2,2]$.  Intuitively, level $r^\star$ makes the anchor-class load constant in expectation.

\medskip
\noindent\textbf{Lemma 7.1 (Single-layer base rate $\Theta(1/\log n)$; parameters from Table~1, failure $\le n^{-c}$).}
\emph{Fix triangle $\{u,v,w\}$ and layer $i$.  Condition on $c_i(u\!\to\!v)=c_i(u\!\to\!w)=1$.  With probability $\Omega(1/\log n)$ over the layer-$i$ hashes (slots, prefixes, PK), the pipeline of \S\ref{sec:algo} confirms this triangle within layer $i$.  Unconditionally,
\[
\Pr[\text{layer $i$ confirms the triangle}]\ \ge\ c_0\,\frac{p_i^2}{\log n}
\]
for an absolute constant $c_0>0$.}

\emph{Proof.}
As in the original analysis: (a) each kept edge lands in a 1-sparse slot with constant probability $c_{\mathrm{slot}}>0$; (b) both mates fall into the canonical anchor-class $(i,u,r^\star,b^\star)$ with constant probability since $\mu/2^{r^\star}\in[1/2,2]$; (c) within that class, the chance that the two offsets fall into one of the (constant many) probed complementary pairs in a \emph{fixed} group is $\Theta(1/T_{i,r^\star})=\Theta(1/\log n)$.  Multiplying yields the claim. \hfill$\square$

\medskip
\noindent\textbf{Lemma 7.2$'$ (Group amplification; parameters from Table~1, failure $\le n^{-c}$).}
\emph{Fix a layer $i$ and the canonical class $(i,u,r^\star,b^\star)$. Condition on $c_i(u\!\to\!v)=c_i(u\!\to\!w)=1$, both slots at $u$ being 1-sparse and decoding, and both mates lying in $(i,u,r^\star,b^\star)$. Let $T:=T_{i,r^\star}=c_T\log n$ and let $R=c_G\log n$ be the number of independent PK groups for this class. Then there exists an absolute constant $\theta>0$ (depending only on the constant probe budget $C_0$ and the one-shot schedule) such that}
\[
\Pr[\text{\emph{hit} in $(i,u,r^\star,b^\star)$ across the $R$ groups}]
\ \ge\ 1-\exp\!\left(-\frac{\theta\,R}{T}\right).
\]

\emph{Proof.}
For a fixed group $t$, the two offsets $\Delta(u\!\to\!v)$ and $\Delta(u\!\to\!w)$ are independent and uniform in $[T]$ modulo the complement constraint. The probability that they land in one of the $C_0$ probed complementary pairs of group $t$ is at least $\theta/T$ for some absolute $\theta>0$ (absorbing the $2C_0$ choices and constant conditioning above). Across $R$ groups, independence gives
\[
\Pr[\text{miss all groups}]\ \le\ (1-\theta/T)^R\ \le\ \exp(-\theta R/T),
\]
hence the bound. \hfill$\square$

\paragraph{Single-run YES w.h.p. via groups.}
Taking $R=c_G\log n$ and $T=c_T\log n$ in Lemma~\ref{sec:hitting}.2$'$, the miss probability for the fixed triangle becomes
\[
\exp\!\left(-\frac{\theta\,c_G}{c_T}\right)\ =\ n^{-\Omega(1)},
\]
i.e., \emph{within a single run and within the canonical class}, the triangle is hit with probability $1-n^{-\Omega(1)}$.  The constant preconditions (keep/slots/class) from Lemma~7.1 only affect the leading constant and are absorbed into $\theta$.

\paragraph{Global coverage via a union bound.}
Let $\mathcal T$ be the set of (undirected) triangles in $G$.  The extremal bound $|\mathcal T|\le O(m^{3/2})$ is standard. Choose $c_G/c_T$ large enough so that the per-triangle miss probability in Lemma~7.2$'$ is $\le n^{-c-4}$.  A union bound over $\mathcal T$ yields that \emph{every} triangle is confirmed (by some group within its canonical class) with probability $\ge 1-n^{-c}$ in a \emph{single} run.

\paragraph{Parameter harmony and accounting.}
Choosing $T_{i,r}=c_T\log n$ keeps the per-group hit rate at $\Theta(1/\log n)$ while the $R=c_G\log n$ independent groups raise the per-class hit probability to $1-n^{-\Omega(1)}$.  Accounting remains near-linear: each class probes only $C_0=\Theta(1)$ complementary pairs \emph{per group}, i.e., $O(R)=O(\log n)$ per class; with retention concentration and one-shot pairing (Sec.~\ref{sec:prelim}) this preserves $O(m)$ work per $(i,r)$ level and total $O(m\log^2 n)$ time and $O(m\log n)$ space.

\paragraph{Failure budget (inline).}
All randomized equalities used here are degree-$2$ 1-sparse tests, each erring with probability $\le P^{-1}$; with $P=n^\kappa$ and a union bound across all layers/levels/classes/events, the aggregate contribution of such errors is $\le n^{-c}$ (Table~1).  The remaining steps (PK complement check and adjacency membership) are deterministic.


\section{One–Shot Pairing Accounting (Core)}
\label{sec:one-shot}

\noindent\textbf{Target lemma (Lemma 9.4; parameters from Table~1, failure $\le n^{-c}$).}
This section formalizes the grouped one–shot accounting that underlies the $O(m\log^2 n)$ bound. All statements are stated with explicit trigger rules and a single bad-event budget.

\subsection*{Objects and counters at a fixed $(i,r)$}
Fix a layer $i$ and a prefix level $r$.  Let
\[
\mathcal S_{i,r}\ :=\ \big\{(i,x,r,b)\ :\ \text{the anchor-class for anchor $x$ and prefix $b\in\{0,1\}^r$ is \emph{nonempty}}\big\}
\]
be the set of \emph{nonempty} (materialized) classes; cf.\ \S\ref{sec:algo}.  
Recall that under Scheme~B, each class $(i,x,r,b)$ instantiates $R=c_G\log n$ independent PK groups $t\in[R]$, each with a \emph{constant} probe list $\mathrm{ProbedPairs}(i,x,r,b,t)$ of size $C_0=\Theta(1)$ and \emph{one-shot} pairing within the group.

Define $Q_{i,r}$ to be the number of \emph{pair checks} executed by \textsc{Query\_Triangle} at this $(i,r)$:
\[
Q_{i,r}\ :=\ \sum_{(i,x,r,b)\in\mathcal S_{i,r}}\ \sum_{t=1}^{R}\ \#\{\text{registered collisions }(j,j^\star)\in\mathrm{Collisions}[i,x,r,b,t]\}.
\]
Each such check performs (1) two bin-level 1-sparse tests (one per complementary bin), (2) optional slot guards on the two witnesses, and (3) one adjacency query on the mate pair.

\subsection*{Trigger rules and invariants (grouped; textual)}
A check may be \emph{triggered} only under the following non-adaptive rules:

\begin{enumerate}\itemsep 2pt
\item \textbf{Activation gate.} A class $(i,x,r,b)$ must be in $\mathcal S_{i,r}$.
\item \textbf{Per-group constant probe schedule.} For each group $t\in[R]$, only bucket pairs in $\mathrm{ProbedPairs}(i,x,r,b,t)$ (of size $C_0$) are ever eligible.
\item \textbf{One-shot within group.} Within a fixed group $t$ and pair $(j,j^\star)$, at most \emph{one} collision is registered (first time both sides are nonempty); subsequent arrivals to $j$ or $j^\star$ do not create new checks.
\item \textbf{Non-adaptivity.} Eligibility depends only on Seeds and the activation bits produced by \textsc{Build\_Sketches}; it never depends on outcomes of earlier checks at the same $(i,r)$.
\end{enumerate}

\paragraph{Remark.}
Cross-group duplication of the \emph{same} mate pair is harmless for accounting: each group contributes at most one check per probed complementary pair.

\subsection*{Lemma 9.4 (grouped one–shot accounting; parameters from Table~1, failure $\le n^{-c}$)}
\label{lem:one-shot}
\emph{For the fixed $(i,r)$, the number of executed pair checks satisfies}
\[
\boxed{\quad Q_{i,r}\ \le\ C\,R\cdot|\mathcal S_{i,r}|\quad}
\]
\emph{for an absolute constant $C=O(C_0)$.  Combined with $\sum_{i,r}|\mathcal S_{i,r}|=O(m\log n)$ w.h.p., this implies}
\[
\sum_{i}\sum_{r} Q_{i,r}\ =\ O\!\big(R\cdot m\log n\big)\ =\ O(m\log^2 n)\qquad\text{w.h.p., since }R=c_G\log n.
\]

\paragraph{Proof.}
Fix a class $C=(i,x,r,b)\in\mathcal S_{i,r}$. For each group $t\in[R]$, only the $C_0$ complementary pairs in $\mathrm{ProbedPairs}(i,x,r,b,t)$ are ever examined, and by the \emph{one-shot} rule each such pair can register \emph{at most one} collision. Hence, the number of checks \emph{charged to $C$} from group $t$ is $\le C_0$, and across all groups it is $\le C_0\,R$. Summing over all nonempty classes yields
\[
Q_{i,r}\ \le\ C_0\,R\cdot |\mathcal S_{i,r}|\ \le\ C\,R\cdot|\mathcal S_{i,r}|.
\]
Using the standard concentration (Sec.~\ref{sec:prelim}) that $\sum_{i,r}|\mathcal S_{i,r}|=O(m\log n)$ w.h.p. and summing over $(i,r)$ gives the claim.  Bad events from limited independence and field-equality tests are covered by the global failure budget in Table~1. \hfill$\square$

\subsection*{Implementation notes (group indices and logs)}
\begin{itemize}\itemsep 2pt
\item \textbf{Indices.} Every bin and collision entry carries a group index $t$: $\mathrm{Bin}[i,x,r,b,t][j]$ and $\mathrm{Collisions}[i,x,r,b,t]$.
\item \textbf{Logs.} The \emph{ClassLogs} entries include $t$ for each registered collision. Optional cross-group dedup of identical mate pairs can reduce verifier work but is \emph{not} needed for the above bound.
\item \textbf{Failure budget.} Degree-$\le2$ 1-sparse tests contribute $\le P^{-1}$ per check; with $P=n^\kappa$ and the bound on $\sum_{i,r}Q_{i,r}$, the union-bound contribution remains $\le n^{-c}$.
\end{itemize}

\paragraph{Summary.}
Per class and per group we probe only $C_0=\Theta(1)$ complementary pairs and register at most one collision per pair; thus each nonempty class contributes only $O(R)$ checks, yielding $Q_{i,r}\le C R |\mathcal S_{i,r}|$ and a global $O(m\log^2 n)$ accounting bound.


\section{Main Theorem}
\label{sec:main}

\paragraph{Theorem 9.1 (time, space, zero-FP, auditable NO, coverage; parameters from Table~1, failure $\le n^{-c}$).}
For the configuration with $I=c_R\log n$ layers, $T_{i,r}=c_T\log n$ PK-buckets per class, prime $P=n^\kappa$, $k=c_k\log n$-wise independence, and per-class PK groups $R=c_G\log n$ (Table~1), the algorithm in \S\ref{sec:algo} satisfies, with high probability over \emph{Seeds}:
\begin{enumerate}\itemsep 2pt
  \item \textbf{Time.} Total running time $O(m\log^2 n)$.
  \item \textbf{Space.} Peak memory $O(m\log n)$.
  \item \textbf{Correctness (YES).} Any reported triangle is genuine (\emph{zero false positives}).
  \item \textbf{Verifiable NO.} If no triangle is reported, the emitted \emph{Seeds+Logs} enables a deterministic third party to confirm NO in $\tilde O(m\log n)$ time.
  \item \textbf{Coverage (YES w.h.p.).} If $G$ contains at least one triangle, a \emph{single execution} outputs YES with probability $\ge 1-n^{-c}$, due to independent per-class PK-group amplification with $R=\Theta(\log n)$ (Lemma~7.2$'$) and a union bound over $|\mathcal T|\le O(m^{3/2})$ triangles.
\end{enumerate}

\paragraph{Proof (integration of \S\ref{sec:concentration}--\S\ref{sec:one-shot}; parameters from Table~1, failure $\le n^{-c}$).}

\noindent\emph{(I) Work accounting $\Rightarrow$ $O(m\log^2 n)$ time.}
Fix a layer $i$ and level $r$.  Let $\mathcal S_{i,r}$ be the set of nonempty classes materialized by \textsc{Build\_Sketches}.  
By Cor.~6.3, $|\mathcal S_{i,r}|=O(m)$ w.h.p.  
By the \emph{grouped} one–shot accounting (Lemma~9.4), for some absolute constant $C$,
\[
Q_{i,r}\ \le\ C\,R\cdot|\mathcal S_{i,r}|\qquad\text{with }R=c_G\log n.
\]
Summing over all $r\in[0,\lfloor\log n\rfloor]$ and $i\in[1,c_R\log n]$,
\[
\sum_{i=1}^{c_R\log n}\sum_{r=0}^{\lfloor\log n\rfloor} Q_{i,r}\ =\ O\!\big(R\,m\log n\big)\ =\ O(m\log^2 n)\qquad\text{w.h.p.}
\]
Each check performs $O(1)$ field equalities and one $O(1)$ adjacency lookup, so \textsc{Query\_Triangle} runs in $O(m\log^2 n)$ time w.h.p.  
The \textsc{Build\_Sketches} pass touches each edge $O(1)$ times per layer in expectation (keep-coin mass $\sum_i p_i=\Theta(1)$; \S\ref{sec:concentration}), hence $O(m\log n)$ total, which is subsumed by $O(m\log^2 n)$.

\smallskip
\noindent\emph{(II) Space $O(m\log n)$.}
Per layer, slot arrays contribute $\sum_x M_x=\Theta(\sum_x d(x))=\Theta(m)$ words; across $I=c_R\log n$ layers this is $O(m\log n)$.  
Class-state (triples $+$ sparse PK bins) is proportional to $\sum_{i,r}|\mathcal S_{i,r}|=O(m\log n)$ (Cor.~6.3).  
Hence the peak memory is $O(m\log n)$.

\smallskip
\noindent\emph{(III) Correctness (zero false positives).}
Class-level 1-sparse equalities are necessary (Lemma~5.1); slot-level 1-sparse plus explicit adjacency are sufficient (Lemma~5.2).  
Reported YES does not depend on randomness other than selecting candidates; therefore zero false positives hold for any \emph{Seeds}.

\smallskip
\noindent\emph{(IV) Auditable NO.}
\textsc{Emit\_Certificate} serializes \emph{Seeds+Logs}.  
Given $G$ and these artifacts, the verifier deterministically reconstructs the should-check domain $\mathcal Q$ from \emph{Seeds} (\S\ref{sec:cert}) and checks coverage and gate equalities in $\tilde O(m\log n)$ time.

\smallskip
\noindent\emph{(V) Coverage (YES w.h.p.).}
Fix a triangle $\{u,v,w\}$.  Condition on the constant-probability slot/class event at its anchor (\S\ref{sec:hitting}).  
Within the canonical class, across $R$ independent PK groups, the probability that at least one group places the two edges into complementary buckets is
\[
\Pr[\text{hit}]\ \ge\ 1-\exp\!\big(-\theta\,R/T\big),\quad T=c_T\log n,
\]
by Lemma~7.2$'$ (Group amplification).  
With $R=c_G\log n$ we obtain a miss probability $n^{-\Omega(1)}$ for the fixed triangle; union-bounding over $|\mathcal T|\le O(m^{3/2})$ triangles yields a global YES with probability $\ge 1-n^{-c}$.

\smallskip
\noindent\textit{Failure budget (single line).}
All randomized equalities occur over $\mathbb F_P$ with per-use error $\le P^{-1}$; concentration uses $k=c_k\log n$-wise independence.  
Choosing $P=n^\kappa$ and the constants of Table~1 makes the probability that any identity spuriously passes or any concentration/one–shot bound fails across all $(i,r)$ at most $n^{-c}$. \hfill$\square$


\section{Certificates \& Verifiability}
\label{sec:cert}

We specify the \emph{minimal} artifact (\emph{Seeds+Logs}) and a deterministic verifier that replays the \emph{should-check domain} induced by the public randomness. The goal is twofold: (i) a YES claim is \emph{self-evident} (triangle listed); (ii) a NO claim is \emph{auditable} in $\tilde O(m\log n)$ time using only sparse logs. All statements below hold with the parameterization of Table~1 and total failure probability $\le n^{-c}$.

\paragraph{Canonical domain of obligations (group-aware).}
Fix a layer $i$ and a prefix level $r$. Let $\mathcal S_{i,r}$ be the set of nonempty anchor-classes (Sec.~\ref{sec:algo}), $T_{i,r}=c_T\log n$ the PK-bucket count, and $R=c_G\log n$ the number of independent PK groups. For group $t\in[R]$, write $g_{i,r}^{(t)}$ for its PK-bucket hash and $(j)^\star\equiv(-j)\bmod T_{i,r}$.

For a class $C=(i,x,r,b)\in\mathcal S_{i,r}$ and a kept edge $(x\!\to\!y)$ materialized in $C$, define its group-$t$ PK-bin
\[
\mathrm{bin}^{(t)}_{i,r}(x\!\to\!y)\ :=\ g_{i,r}^{(t)}\!\big(\Delta_i(x\!\to\!y,r,b)\big)\ \in [T_{i,r}]\,.
\]
The \emph{should-check domain} at $(i,r,t)$ is the unordered class-pair set
\begin{equation}
\mathcal Q_{i,r,t}\ :=\ \Big\{\ \{C,C'\}\subseteq \mathcal S_{i,r}\ :\ \exists\ (x\!\to\!v)\in C,\ (x\!\to\!w)\in C'\ \text{s.t. } \mathrm{bin}^{(t)}_{i,r}(x\!\to\!v)=\big(\mathrm{bin}^{(t)}_{i,r}(x\!\to\!w)\big)^\star\ \Big\}.
\label{eq:Qirt}
\end{equation}
We then set the per-level domain $\mathcal Q_{i,r}:=\bigcup_{t=1}^{R}\mathcal Q_{i,r,t}$ and the global domain $\mathcal Q:=\bigcup_{i,r}\mathcal Q_{i,r}$. Intuitively, $\mathcal Q$ enumerates exactly those (class,group)-keyed adjacency obligations that \emph{must} be checked once under the non-adaptive schedule (constant many complementary pairs per class \emph{per group}; Sec.~\ref{sec:algo}).

\paragraph{Canonical keys with group index.}
We fold the group index into the canonical pair key:
\[
\kappa_t\ :=\ (i,\ r,\ t,\ \min\{(x,b),(v,b')\},\ \max\{(x,b),(v,b')\})\,.
\]
All coverage checks and de-duplication use $\kappa_t$ (rather than the group-agnostic key).

\subsection*{Certificate schema (Seeds+Logs)}

A certificate is the quadruple
\[
\mathrm{Cert}\ =\ \big(\ \mathrm{Seeds},\ \mathrm{ClassLogs},\ \mathrm{SlotLogs},\ \mathrm{AdjLogs}\ \big)\,,
\]
with the following fields.

\paragraph{Seeds (public randomness and configuration).}
Prime $P=n^\kappa$; independence level (2-wise or $k=c_k\log n$-wise); schedule $\{p_i\}$ and $I=c_R\log n$; the per-group count $R=c_G\log n$. Hash/PRF seeds for
\[
H_i,\ K_i,\ h_{\mathrm{id}},\ h_{\mathrm{slot}},\ s,\ \{g_{i,r}^{(t)}\}_{t=1}^{R}\ \ (\text{or equivalently }\{\phi^{(t)}\}_{t=1}^R),
\]
keep-coins $c_i$, and the deterministic probe schedule for each $(i,x,r,b,t)$ (\textsc{ProbedPairs}$(i,x,r,b,t)$ with $C_0$ complementary pairs). Independence holds \emph{across groups} and \emph{across layers}.

\paragraph{ClassLogs (only nonempty classes).}
For each $C=(i,x,r,b)\in\mathcal S_{i,r}$:
\begin{itemize}\itemsep 1pt
\item Class triple $\Sigma(C)=(\Sigma_0,\Sigma_1,\Sigma_2)$; flag \texttt{pass\_class}$:=\mathbf 1[\Sigma_1^2=\Sigma_0\Sigma_2\wedge \Sigma_0\neq 0]$.
\item A list of \emph{PK-collisions}. Each entry records the group and the paired bins:
\[
\big(\underbrace{\texttt{group\_id}=t}_{\text{new}},\ \mathrm{bin},\ \mathrm{bin}^\star,\ 
\mathrm{wit}_v=(x\!\to\!v,\ sidx_v),\ \mathrm{wit}_w=(x\!\to\!w,\ sidx_w)\big),
\]
together with a \texttt{paired\_once} flag (set on registration; Sec.~\ref{sec:algo}). (Bins are from \textsc{ProbedPairs}$(i,x,r,b,t)$.)
\item Optional checksum/XOR over occupied PK-bins for tamper evidence.
\end{itemize}

\paragraph{SlotLogs (referenced slots only).}
For each referenced $(i,x,\ sidx)$ store the slot triple $(A,B,C)$, flag \texttt{pass\_slot}$:=\mathbf 1[B^2=AC\wedge A\neq 0]$, and decoded $\widehat{\mathrm{id}}=B/A$.

\paragraph{AdjLogs (performed adjacency probes only; group-aware key).}
Each entry is keyed by the canonical \emph{grouped} key $\kappa_t$ and stores also a salted mate fingerprint:
\[
\big(\kappa_t,\ \{h_{\mathrm{id}}(v),h_{\mathrm{id}}(w)\}_{\text{salted}},\ \texttt{adjacent}\in\{0,1\}\big).
\]
This ensures coverage auditing at the (class,group) granularity.

\subsection*{Constructing $\mathcal Q$ from Seeds (group-aware)}
The verifier does \emph{not} need to re-run adaptive pairing—only to reconstruct $\mathcal Q$ from Seeds and $G$.

\begin{algorithm}[t]
\caption{Reconstruct-Should-Check-Domain (group-aware)}
\label{alg:reconstructQ}
\begin{algorithmic}[1]
\Require Seeds, graph $G$
\Ensure $\mathcal Q$
\State Regenerate all hashes/coins/PRFs and the probe schedule from Seeds
\State $\mathcal Q \gets \emptyset$
\For{$i=1$ \textbf{to} $I$}
  \For{each level $r$}
    \State Materialize $\mathcal S_{i,r}$ by replaying slot 1-sparse filtering and prefix matches
    \For{$t=1$ \textbf{to} $R$}
      \For{each $C\in\mathcal S_{i,r}$}
        \State Build $\mathrm{OccBins}^{(t)}[C]\gets \{\mathrm{bin}^{(t)}_{i,r}(x\!\to\!y)$ that lie in \textsc{ProbedPairs}$(i,x,r,b,t)\}$
      \EndFor
      \For{each unordered $\{C,C'\}\subseteq \mathcal S_{i,r}$}
        \If{$\mathrm{OccBins}^{(t)}[C]$ intersects $\big(\mathrm{OccBins}^{(t)}[C']\big)^\star$}
           \State $\mathcal Q \gets \mathcal Q \cup \{\kappa_t(i,r,t;C,C')\}$
        \EndIf
      \EndFor
    \EndFor
  \EndFor
\EndFor
\State \Return $\mathcal Q$
\end{algorithmic}
\end{algorithm}

\subsection*{Equivalence: "LogPairs$_t$ = $\mathcal Q$" (group-aware)}
\paragraph{Definition (LogPairs with group key).}
Let \textsf{LogPairs} be the multiset of keys $\kappa_t$ obtained by taking, for every $C$, the union of its PK-collision entries in \texttt{ClassLogs}[C] (each carries \texttt{group\_id}$=t$) and applying canonical de-duplication on $\kappa_t$.

\paragraph{Proposition 10.1 (coverage equivalence; parameters from Table~1, failure $\le n^{-c}$).}
With probability at least $1-n^{-c}$ over Seeds,
\[
\textsf{LogPairs}\ =\ \mathcal Q
\qquad\text{and}\qquad
\forall\ \kappa_t\in \mathcal Q:\ \kappa_t\ \text{is logged \emph{exactly once}.}
\]
\emph{Sketch.} "$\subseteq$": each logged collision arises from complementary bins within some $(i,r,t)$, so its key $\kappa_t$ belongs to $\mathcal Q$. "$\supseteq$": the constant-probe schedule and one–shot rule register the \emph{first} appearance of a complementary pair within $(i,r,t)$; keys include $t$, preventing cross-group conflation. A union bound over $(i,r,t)$ uses Cor.~6.3 for class counts and $P^{-1}$ for algebraic coincidences.

\subsection*{Deterministic verifier (group-aware)}
\begin{algorithm}[t]
\caption{\textsc{VERIFY\_NO} (group-aware)}
\label{alg:verify}
\begin{algorithmic}[1]
\Require Graph $G$, certificate $\mathrm{Cert}=(\mathrm{Seeds},\mathrm{ClassLogs},\mathrm{SlotLogs},\mathrm{AdjLogs})$
\Ensure Accept iff logs cover $\mathcal Q$ exactly and all gates fail (NO-instance)
\State Regenerate Seeds; fix orientation and scan order
\State Rebuild $\mathcal S_{i,r}$ and class triples $\Sigma(C)$; check against \texttt{ClassLogs} and re-evaluate \texttt{pass\_class}
\State $\mathcal Q \gets$ \textsc{Reconstruct-Should-Check-Domain}(Seeds, $G$)
\State Extract \textsf{LogPairs} $\gets$ canonical union of $\kappa_t$ from \texttt{ClassLogs}, one per registered collision
\If{\textsf{LogPairs} $\ne \mathcal Q$} \State \Return \textbf{reject} \EndIf
\For{each $\kappa_t=(i,r,t;\ C,C')\in \mathcal Q$}
  \If{\texttt{pass\_class}$(C)=0$ or \texttt{pass\_class}$(C')=0$} \State \textbf{continue} \EndIf
  \State Fetch witnesses $(x\!\to\!v,sidx_v)$ and $(x\!\to\!w,sidx_w)$ from the corresponding \texttt{ClassLogs} entry with group\_id $=t$
  \For{each referenced slot $(i,x,sidx)\in\{sidx_v,sidx_w\}$}
      \State Verify \texttt{pass\_slot}$=1$ and decoded ID equals $h_{\mathrm{id}}(\text{mate})$
  \EndFor
  \State Look up \texttt{AdjLogs} at key $\kappa_t$; read bit \texttt{adjacent}
  \If{\texttt{adjacent}=1} \State \Return \textbf{accept YES} with triangle \EndIf
\EndFor
\State \Return \textbf{accept NO}
\end{algorithmic}
\end{algorithm}

\subsection*{Soundness, completeness, and cost (unchanged up to grouping)}
\paragraph{Theorem 10.2 (soundness \& completeness; parameters from Table~1, failure $\le n^{-c}$).}
\emph{Soundness.} Any triangle reported by the verifier exists in $G$ (zero FP), since checks are deterministic equalities over $\mathbb F_P$ plus explicit adjacency lookups; adding group index only refines keys.  
\emph{Completeness (NO).} If the algorithm outputs NO and \textsf{LogPairs}$=\mathcal Q$, then no $\kappa_t$ passes all three gates; otherwise a missing/extra $\kappa_t$ is detected at coverage.  
\emph{Complexity.} Replaying classes and reconstructing $\mathcal Q$ across $R=\Theta(\log n)$ groups remains $\tilde O(m\log n)$ time and $O(m\log n)$ space, since per-class probes are $O(1)$ \emph{per group} and $\sum_{i,r}|\mathcal S_{i,r}|=O(m\log n)$ w.h.p.

\paragraph{Failure budget (single line).}
All randomized equalities can fail only with probability $\le 1/P$ per site; taking $P=n^\kappa$ and $k=c_k\log n$ and union-bounding across all $(i,r,t)$ maintains total failure probability $\le n^{-c}$.


\section{Tightness \& Limitations}
\label{sec:tight}

We explain why the per-layer hit probability in Sec.~\ref{sec:hitting} is optimal under our local, non-adaptive, constant-probe model, and why global algebraic accelerations are incompatible with near-linear accounting and auditability. Unless noted otherwise, statements are with parameters from Table~1 and total failure probability $\le n^{-c}$.

\subsection*{A single-layer lower bound (Yao)}
\paragraph{Model for the lower bound.}
Fix one layer $i$ with keep-rate $p_i$. A deterministic algorithm $\mathcal A$ may: (i) allocate $M_x=\Theta(d(x))$ slots per anchor $x$ and $O(1)$-word state per nonempty class $(i,x,r,b)$; (ii) in each nonempty class, perform at most $C$ \emph{pairing probes} between complementary PK-bins (for a fixed constant $C\ge 1$); (iii) use a bucketed PK map with $T_{i,r}$ bins and an involution $b\mapsto\overline b$; (iv) apply the three-gate pipeline (class 1-sparse $\to$ slot 1-sparse $\to$ adjacency). The per-level $O(m)$ accounting (Sec.~\ref{sec:one-shot}) forces $T_{i,r}=\Theta(\log n)$ to keep expected checks per class $O(1)$.

\paragraph{Hard distribution $\mathcal D$.}
Consider graphs that contain exactly one triangle $\{u,v,w\}$ plus distractors that preserve $d^+(u)$ but create no other triangles. Seeds are public. Condition on the favorable event that $(u\!\to\!v)$ and $(u\!\to\!w)$ are kept and their slots are individually 1-sparse (constant probability), and let $r^\star$ be the (unique) prefix level with expected class load in $[1/2,2]$ (Sec.~\ref{sec:hitting}). With an independent 2-universal $h_{\mathrm{pk}}$, the two offsets $\Delta(u\!\to\!v)$ and $\Delta(u\!\to\!w)$ hash \emph{uniformly and independently} to bins in $\{0,\dots,T{-}1\}$, where $T:=T_{i,r^\star}=\Theta(\log n)$; success in this class requires landing in complementary bins $b,\overline b$.

\paragraph{Theorem 2 (Single-layer $\Theta(1/\log n)$ is tight under constant-probe \& $O(m)$/level accounting).}
\emph{For any (possibly randomized) algorithm in the model above, the success probability of exposing the fixed triangle within one layer is at most $\Theta(1/\log n)$.}
\medskip

\emph{Proof (Yao).}
Fix a deterministic $\mathcal A$ that probes at most $C$ complementary bin-pairs inside the anchor-class at level $r^\star$. Conditioned on the favorable guards, the view of $\mathcal A$ prior to bin probes is independent of the (uniform) bin locations of the two relevant offsets. Therefore the chance that one of its $\le C$ probed complementary pairs equals the true pair is at most $C/T=\Theta(1/\log n)$. Randomized algorithms are distributions over such $\mathcal A$, so the same bound holds by Yao’s minimax principle. \hfill$\square$

\paragraph{Necessity of $T=\Theta(\log n)$.}
If $T\ll \log n$, expected complementary-bin collisions per (constant-load) class explode, violating the $O(1)$-probes-per-class budget. If $T\gg\log n$, the per-pair hit probability shrinks, worsening amplification. Thus $T=\Theta(\log n)$ is forced by the $O(m)$/level accounting and per-class constant work.

\paragraph{Keep-rate and level choice cannot beat $1/\log n$.}
Let $\mu=d^+(u)p_i$. If $\mu\ll 1$, the two mates rarely co-occur; if $\mu\gg 1$, the class load grows and the algorithm must raise $T$ proportionally to keep work bounded, preserving the $1/\log n$ scaling. The only useful window is $\mu\in[1/2,2]$, which is exactly the regime used in Sec.~\ref{sec:hitting}.

\subsection*{Why not global algebraic convolutions}
\begin{itemize}\itemsep 1pt
\item \textbf{Accounting barrier.} Global outer-product/convolution schemes over all vertices incur $\Omega(\sum_x d(x)^2)$ arithmetic on worst-case sparse graphs, i.e., $\Omega(m^{3/2})$ or worse, contradicting $O(m\log^2 n)$.
\item \textbf{Methodological barrier.} Zero-FP via exact algebra typically relies on global cancellations; the certificates are inherently non-local. Our design requires \emph{local} 1-sparse checks and \emph{replayable} Seeds+Logs in $\tilde O(m\log n)$.
\item \textbf{Fine-grained barrier.} Any uniform improvement to the dense/rectangular algebraic bounds would (via standard reductions) push on BMM; our approach is explicitly orthogonal to that frontier.
\end{itemize}

\subsection*{Independence assumptions and Poissonization}
\begin{itemize}\itemsep 1pt
\item \textbf{1-sparse gates.} Class/slot identities are degree-$2$ equalities over $\mathbb F_P$ and require only \emph{2-wise} independence (Sec.~\ref{sec:correctness}); coincidence probability per site is $\le 1/P$.
\item \textbf{Concentration.} Keep-coins and PK bucketization may use either (i) full independence with Bernoulli($p_i$) or (ii) $k$-wise independence with $k=c_k\log n$, or (iii) Poissonization plus de-Poissonization. All yield the tails used in Sec.~\ref{sec:concentration}.
\item \textbf{Non-adaptivity.} Seeds are fixed layerwise before any scans; eligibility depends only on Seeds and nonemptiness bits, preserving the limited-independence analyses.
\end{itemize}

\subsection*{Portability and practical limits}
\paragraph{Directed graphs.} Replace the final undirected adjacency check by $(v\!\to\!w)$; all sketching/alignment remains unchanged.

\paragraph{Multigraphs/weights.} Salt edge instances in the hashes to avoid coalescing. If weighted predicates are needed, add a second triple with value field multiplied by the weight (or a random mask) and keep the final check deterministic; zero-FP remains.

\paragraph{Other semirings.} Our decoding uses inversion ($B/A$), so a field is required. For $(\min,+)$-style semirings, embed computations in prime fields (or use CRT across primes) and keep the ultimate property tested deterministically.

\paragraph{Certificate size.} Seeds+Logs are sparse but necessarily $\tilde O(m\log n)$ in the worst case: replay requires exposing all nonempty classes and the PK collisions that define the should-check domain $\mathcal Q$; shrinking below this would forfeit auditability.

\paragraph{Takeaway.}
Under single-rate, constant-probe, local verification with near-linear accounting, a \emph{per-layer} $\Theta(1/\log n)$ hit rate is optimal; $\Theta(\log n)$ independent layers are the right (and necessary) amplification knob. Dense algebra is mismatched with both our accounting and our auditable-NO interface.


\section{Conclusion \& Outlook}
\label{sec:conclusion}

\paragraph{Summary (explicit guarantees).}
We presented a \emph{local, non-adaptive} triangle detector that combines
(i) per-anchor slot sketches with \emph{1-sparse} consistency/decoding,
(ii) prefix classes and \emph{pair-key} (PK) alignment, and
(iii) a \emph{one-shot} pairing discipline tied to a replayable \emph{Seeds+Logs} interface.
With parameters from Table~1, and over the public Seeds, the algorithm runs in
\textbf{$O(m\log^2 n)$ time} and \textbf{$O(m\log n)$ space} (w.h.p.),
reports \textbf{only true triangles} (zero false positives), and when reporting NO emits a
\textbf{verifiable certificate} that a deterministic third-party checker validates in $\tilde O(m\log n)$ time.
All probabilistic claims place the total failure mass at $\le n^{-c}$, budgeted from the $1/P$ coincidence of degree-2 identities plus standard concentration (Sec.~\ref{sec:concentration}, \S\ref{sec:correctness}, \S\ref{sec:one-shot}).

\paragraph{Auditable NO via \emph{should-check} domain.}
A central feature is that the Seeds uniquely determine a \emph{should-check domain} $\mathcal Q$ of unordered class pairs.
Our certificate logs exactly the PK-collision witnesses needed to cover $\mathcal Q$ once (and only once), together with the slot triples and the adjacency probes.
The verifier regenerates the randomness from Seeds, reconstructs $\mathcal Q$ in $\tilde O(m\log n)$ time, and checks \emph{coverage~=~Logs} as well as the three gates
\mbox{(class $1$-sparse $\rightarrow$ slot $1$-sparse $\rightarrow$ adjacency)}, yielding an auditable NO.

\paragraph{Positioning.}
Our route is explicitly \emph{non-algebraic and local}. We neither use nor reduce to dense/rectangular matrix multiplication; the per-layer single-hit rate $\Theta(1/\log n)$ (Sec.~\ref{sec:hitting}) is shown tight under constant per-class probes and $O(m)$/level accounting (Sec.~\ref{sec:tight}).
This avoids the "challenge BMM" interpretation while providing transparent certificates.

\paragraph{Outlook (measured directions; no claims of completion).}
We outline conservative paths that preserve locality and verifiability:

\begin{enumerate}\itemsep 4pt
\item \textbf{Compressing the per-layer levels (toward Las Vegas $O(m\log n)$).}
Replace the geometric sweep over all prefix levels by a three-phase schedule inside each layer (coarse~$\to$~mid~$\to$~fine), ensuring each retained edge participates in \emph{one} phase.
The goal is constant-load classes without scanning all $r$, keeping $T=\Theta(\log n)$ and one-shot pairing intact.
A formal treatment would sharpen the class-load martingale while preserving non-adaptivity.

\item \textbf{Deterministic seeds via small covering families (CBSIF).}
Instantiate $H_i,K_i,h_{\mathrm{slot}}$ from combinatorial splitters / almost-$k$-wise families ($k=\Theta(\log n)$), and enumerate $O(\log\log n)$ PK bucket patterns that cover complementary bins at constant load.
A plausible target is \emph{deterministic} $O(m\log n\log\log n)$ with the same certificate semantics; constructing the families and carrying the one-shot accounting through deterministically remain open.

\item \textbf{From decision to enumeration (output-sensitive).}
Drop early stop and emit every confirmed mate-pair $\{v,w\}$ per anchor once.
By anchor-ordering and in-class de-dup, the running time becomes
$O(m\log^2 n + T)$ where $T$ is the number of triangles; the certificate simply appends per-output \texttt{Emit}-entries (slot IDs $+$ adjacency pair).

\item \textbf{Dynamic / streaming / parallel surfaces.}
The \textsc{Build\_Sketches} pass is single-scan and non-adaptive, which supports streaming with polylog memory per layer.
Edge updates touch $O(1)$ slots/PK buckets per layer in expectation; layers and anchors parallelize embarrassingly, and Seeds make partitioning reproducible for distributed verification.
\end{enumerate}

\paragraph{Artifact and minimal reproducibility.}
To reduce "black-box" concerns, we provide a minimal artifact:
a $\sim$200-line prototype implementing \textsc{Build\_Sketches}/\textsc{Query\_Triangle}/\textsc{Verifier} on integer IDs and public Seeds, two schematic figures (work scaling vs.\ $n$; single-layer hit $\approx 1/\log n$), and a short Seeds+Logs example (JSON).
These are not intended as empirical evaluation, only as reproducibility scaffolding for the certificate interface.

\paragraph{Closing.}
Our results suggest that \emph{combinatorial sketching + 1-sparse verification + Seeds+Logs} is a viable, auditable alternative to algebraic hybrids for worst-case sparse triangle detection.
The near-term milestones are clear—and deliberately modest: (i) compress levels inside layers (Las Vegas $O(m\log n)$), and (ii) a covering-family deterministic variant—both while preserving the zero-FP pipeline and the auditable NO-certificate.


\section*{References}

\medskip

{
\small

[1] Itai, A.\ \& Rodeh, M.\ (1978) Finding a minimum circuit in a graph. \emph{SIAM J. Comput.} \textbf{7}(4):413--423.

[2] Chiba, N.\ \& Nishizeki, T.\ (1985) Arboricity and subgraph listing algorithms. \emph{SIAM J. Comput.} \textbf{14}(1):210--223.

[3] Alon, N., Yuster, R.\ \& Zwick, U.\ (1997) Finding and counting given length cycles. \emph{Algorithmica} \textbf{17}(3):209--223.

[4] Le Gall, F.\ (2012) Faster algorithms for rectangular matrix multiplication. \emph{Proc. 53rd FOCS}, pp.~514--523; arXiv:1204.1111.

[5] Vassilevska Williams, V.\ \& Williams, R.\ (2018) Subcubic equivalences between path, matrix, and triangle problems. \emph{J. ACM} \textbf{65}(5):27.

[6] Dumitrescu, A.\ (2021) Finding triangles or independent sets; and other dual pair approximations. arXiv:2105.01265.

[7] Cormode, G.\ \& Muthukrishnan, S.\ (2005) An improved data stream summary: the Count--Min sketch. \emph{J. Algorithms} \textbf{55}(1):58--75.

[8] Freivalds, R.\ (1979) Fast probabilistic algorithms. \emph{Mathematical Foundations of Computer Science}, LNCS 74:57--69.

[9] Moser, R.A.\ \& Tardos, G.\ (2010) A constructive proof of the general Lov\'asz Local Lemma. \emph{J. ACM} \textbf{57}(2):11.

[10] G\"o\"os, M.\ \& Suomela, J.\ (2016) Locally checkable proofs in distributed computing. \emph{Theory of Computing} \textbf{12}(19):1--33.

[11] McConnell, R.M., Mehlhorn, K., N\"aher, S.\ \& Schweitzer, P.\ (2011) Certifying algorithms. \emph{Computer Science Review} \textbf{5}(2):119--161.
}


\clearpage

\section*{Appendix A: Pseudocode and Data-Structure Details}
\label{app:pseudo}

\noindent\emph{Scope.} This appendix gives complete, implementation-ready pseudocode, word-level data layouts (including canonical keys for de-duplication), and self-contained proofs of the Appendix-level time/space bounds. We work in the Word-RAM with word size \(\Theta(\log n)\); arithmetic is over a prime field \(\mathbb F_P\) (Appendix~E). All randomness is fixed \emph{non-adaptively} by \texttt{Seeds}. Unless noted, concentration and coincidence events are taken \emph{with high probability} \emph{(parameters from Table~1, failure \(\le n^{-c}\))}.

\subsection*{A.1 Pre-pass and global notation}
We perform one \(O(m)\) degree pre-pass to allocate per-anchor slot arrays.
\begin{itemize}\itemsep 1pt
\item \textbf{Seeds} (public): field prime \(P\); independence level (2-wise or \(k\)-wise with \(k=\Theta(\log n)\)); schedule \(\{p_i\}\) with \(\sum_i p_i=\Theta(1)\) and \(\sum_i p_i^2=\Theta(1)\); layer count \(I=c_R\log n\); \emph{per-class PK groups} \(R=c_G\log n\); bucket counts \(T_{i,r}=\lceil c_T\log n\rceil\).
\item \textbf{Hashes/PRFs/signs}: \(H_i(\cdot)\), \(K_i(\cdot)\), \(h_{\mathrm{id}}(\cdot)\), \(h_{\mathrm{slot}}(\cdot)\), family \(\{g_{i,r}^{(t)}(\cdot)\}_{t=1}^{R}\), sign \(s(\cdot)\in\{-1,+1\}\); deterministic probe PRFs \(\phi^{(t)}\) for \textsc{ProbedPairs}.
\item \textbf{Keep-coins}: \(c_i(e)\in\{0,1\}\) with \(\Pr[c_i(e)=1]=p_i\), mutually independent across \(i\) (or \(k\)-wise).
\item \textbf{Orientation}: a total order \(\prec\) on \(V\); \(\{x,y\}\) is oriented to \(x\!\to\!y\) iff \(x\prec y\).
\end{itemize}

\subsection*{A.2 Data structures (word-level layouts)}
For each layer \(i\) and anchor \(x\):

\paragraph{Slots.}
\texttt{Slots[\(x\)]} is an array of length \(M_x:=16\,d(x)\); each entry stores \((A,B,C)\in\mathbb F_P^3\).

\paragraph{Classes (group-aware).}
For each level \(r\in\{0,\dots,L_{x,i}\}\) and prefix \(b\in\{0,1\}^r\):
\begin{itemize}\itemsep 0pt
\item \texttt{ClassSigma[\(i,x,r,b\)]} \(=(\Sigma_0,\Sigma_1,\Sigma_2)\in\mathbb F_P^3\).
\item \texttt{Active[\(i,x,r,b\)]} \(\in\{0,1\}\).
\item \textbf{Per-group bin tables} \(\{\texttt{Bin}[i,x,r,b,t]\}_{t=1}^{R}\): sparse maps \(j\in[0,T_{i,r}-1]\mapsto \texttt{BinRec}\) with
\[
\texttt{BinRec}=\big(\texttt{triple}=(\Xi_0,\Xi_1,\Xi_2)\in\mathbb F_P^3,\ \texttt{witness}=(x\!\to\!y,\ \texttt{sidx},\ \texttt{group\_id}=t)\big).
\]
\item \textbf{Per-group collision lists} \(\{\texttt{Collisions}[i,x,r,b,t]\}_{t=1}^{R}\), entries
\((\texttt{group\_id}=t,\ j,\ j^{\star},\ \texttt{wit}_v,\ \texttt{wit}_w)\).
\item \textbf{Per-group one-shot flags} \(\{\texttt{Paired}[i,x,r,b,t]\}_{t=1}^{R}\): Boolean maps on bins, initially all \texttt{false}.
\end{itemize}
\emph{Deterministic probe schedule:} for each class \((i,x,r,b)\) and group \(t\), a fixed list \(\textsc{ProbedPairs}(i,x,r,b,t)=\{(j^{(s)},(j^{(s)})^{\star})\}_{s=1}^{C_0}\) with \(C_0=\Theta(1)\).

\paragraph{Per-\((i,r,t)\) dedup dictionary (optional, for parallel).}
Exchange-symmetric dictionary \texttt{Checked} keyed by \(\kappa_t\) (defined below). Cleared between \((i,r,t)\).

\paragraph{Canonical class-pair key with group index.}
For fixed \((i,r,t)\) and classes \(C=(i,x,r,b)\), \(C'=(i,v,r,b')\) define
\[
\mathrm{canon}(x,b;\ v,b')=\big(\min\{(x,b),(v,b')\},\ \max\{(x,b),(v,b')\}\big),\quad
\kappa_t=(i,\ r,\ t,\ \mathrm{canon}(x,b;\ v,b')).
\]

\subsection*{A.3 \textsc{Build\_Sketches} (single pass; non-adaptive; group-aware)}
\FloatBarrier
\begin{algorithm}[p!]
  \centering
  \vspace*{\fill}
  \begin{minipage}{0.96\linewidth}
    \caption{\textsc{Build\_Sketches} (group-aware)}
    \label{alg:build}
    \begin{algorithmic}[1]
\Require Oriented edge set \(E\), Seeds
\Ensure Filled \texttt{Slots}, per-class \texttt{ClassSigma}, per-group \texttt{Bin}, \texttt{Collisions}, \texttt{Paired}
\State \textbf{Degree pre-pass:} allocate \(M_x=16\,d(x)\) slots and zero them
\For{\(i \gets 1\) \textbf{to} \(I\)}
  \For{\textbf{each} \(x \in V\)}
    \For{\(r \gets 0\) \textbf{to} \(L_{x,i}\)}
      \ForAll{\(b \in \{0,1\}^r\)}
        \State \(\texttt{ClassSigma}[i,x,r,b]\gets(0,0,0)\)
        \State \(\texttt{Active}[i,x,r,b]\gets 0\)
        \For{\(t \gets 1\) \textbf{to} \(R\)}
          \ForAll{\((j,j^{\star})\in\textsc{ProbedPairs}(i,x,r,b,t)\)}
            \State \(\texttt{Bin}[i,x,r,b,t][j].\texttt{triple}\gets(0,0,0)\);\quad \(\texttt{Bin}[i,x,r,b,t][j].\texttt{witness}\gets\emptyset\)
            \State \(\texttt{Bin}[i,x,r,b,t][j^{\star}].\texttt{triple}\gets(0,0,0)\);\quad \(\texttt{Bin}[i,x,r,b,t][j^{\star}].\texttt{witness}\gets\emptyset\)
            \State \(\texttt{Paired}[i,x,r,b,t][j]\gets\texttt{false}\);\quad \(\texttt{Paired}[i,x,r,b,t][j^{\star}]\gets\texttt{false}\)
          \EndFor
          \State \(\texttt{Collisions}[i,x,r,b,t]\gets\) empty list
        \EndFor
      \EndFor
    \EndFor
  \EndFor
\EndFor
\State \textbf{Single streaming scan over \(E\)}
\ForAll{directed edges \(e=(x \to y)\in E\)}
  \For{\(i \gets 1\) \textbf{to} \(I\)}
    \If{\(c_i(e)=0\)}
      \State \textbf{continue}
    \EndIf
    \State \(sidx\gets h_{\mathrm{slot}}(x,y,i)\bmod M_x\);\quad \(id\gets h_{\mathrm{id}}(y)\);\quad \(sgn\gets s(x,y,i)\)
    \State \((A,B,C)\gets \texttt{Slots}[x][sidx]\)
    \State \(\texttt{Slots}[x][sidx]\gets (A{+}sgn,\ B{+}sgn\cdot id,\ C{+}sgn\cdot id^2)\)
    \State \((A,B,C)\gets \texttt{Slots}[x][sidx]\)
    \If{\(A=0\) \textbf{ or } \(B^2\ne AC\) \textbf{ or } \(B/A\ne id\)} \Comment{Gate 1: slot 1-sparse \& decode}
      \State \textbf{continue}
    \EndIf
    \For{\(r \gets 0\) \textbf{to} \(L_{x,i}\)}
      \If{\(\mathrm{pref}_r(K_i(x)) \ne \mathrm{pref}_r(K_i(y))\)}
        \State \textbf{continue}
      \EndIf
      \State \(b\gets \mathrm{pref}_r(K_i(x))\);\quad \(\texttt{Active}[i,x,r,b]\gets 1\)
      \State \((\Sigma_0,\Sigma_1,\Sigma_2)\gets \texttt{ClassSigma}[i,x,r,b]\)
      \State \(\texttt{ClassSigma}[i,x,r,b]\gets (\Sigma_0{+}A,\ \Sigma_1{+}B,\ \Sigma_2{+}C)\)
      \For{\(t \gets 1\) \textbf{to} \(R\)} \Comment{Per-group constant-probe (per class \(O(R)\) total)}
        \State \(\Delta \gets (H_i(y)-H_i(x)) \bmod P\)
        \State \(j \gets g_{i,r}^{(t)}(\Delta)\);\quad \(j^{\star}\gets (-j) \bmod T_{i,r}\)
        \If{\((j,j^{\star}) \notin \textsc{ProbedPairs}(i,x,r,b,t)\)}
          \State \textbf{continue}
        \EndIf
        \State \((\Xi_0,\Xi_1,\Xi_2)\gets \texttt{Bin}[i,x,r,b,t][j].\texttt{triple}\)
        \State \(\texttt{Bin}[i,x,r,b,t][j].\texttt{triple}\gets (\Xi_0{+}A,\ \Xi_1{+}B,\ \Xi_2{+}C)\)
        \If{\(\texttt{Bin}[i,x,r,b,t][j].\texttt{witness}=\emptyset\)}
          \State \(\texttt{Bin}[i,x,r,b,t][j].\texttt{witness}\gets (x\!\to\!y,\ sidx,\ \texttt{group\_id}=t)\)
        \EndIf
        \If{\(\texttt{Bin}[i,x,r,b,t][j^{\star}].\texttt{witness}\ne\emptyset\) \textbf{and} \(\neg\texttt{Paired}[i,x,r,b,t][j]\) \textbf{and} \(\neg\texttt{Paired}[i,x,r,b,t][j^{\star}]\)}
          \State Append record \((\texttt{group\_id}=t,\ j,\ j^{\star},\ \texttt{Bin}[i,x,r,b,t][j].\texttt{witness},\ \texttt{Bin}[i,x,r,b,t][j^{\star}].\texttt{witness})\) to \(\texttt{Collisions}[i,x,r,b,t]\)
          \State \(\texttt{Paired}[i,x,r,b,t][j]\gets\texttt{true}\);\quad \(\texttt{Paired}[i,x,r,b,t][j^{\star}]\gets\texttt{true}\)
        \EndIf
      \EndFor
    \EndFor
  \EndFor
\EndFor
    \end{algorithmic}
  \end{minipage}
  \vspace*{\fill}
\end{algorithm}
\FloatBarrier 

\subsection*{A.4 \textsc{Query\_Triangle} (three gates; canonical enumeration; group-aware)}
\begin{algorithm}[H]
\caption{\textsc{Query\_Triangle} (group-aware)}
\label{alg:query}
\begin{algorithmic}[1]
\Require Per-group \texttt{Collisions}, \texttt{Bin}, \texttt{ClassSigma}, \texttt{Slots}, adjacency oracle for \(E\)
\Ensure Either a triangle \((u,v,w)\) or \textbf{NO}
\For{\(i \gets 1\) \textbf{to} \(I\)}
  \For{\(r \gets 0\) \textbf{to} \(\lfloor\log n\rfloor\)}
    \ForAll{\(x \in V\) in increasing order}
      \ForAll{\(b \in \{0,1\}^r\) with \(\texttt{Active}[i,x,r,b]=1\)}
        \State \((\Sigma_0,\Sigma_1,\Sigma_2)\gets \texttt{ClassSigma}[i,x,r,b]\)
        \If{\(\Sigma_0=0\) \textbf{or} \(\Sigma_1^2\ne \Sigma_0\Sigma_2\)} \Comment{Gate 1: class necessary}
          \State \textbf{continue}
        \EndIf
        \For{\(t \gets 1\) \textbf{to} \(R\)} \Comment{Per-group enumeration}
          \ForAll{\((\texttt{group\_id}=t,\ j,\ j^{\star},\ \texttt{wit}_v,\ \texttt{wit}_w)\in \texttt{Collisions}[i,x,r,b,t]\)}
            \State \((x\!\to\!v,\ sidx_v,\_)\gets \texttt{wit}_v\);\quad \((x\!\to\!w,\ sidx_w,\_)\gets \texttt{wit}_w\)
            \State \((\Xi_0,\Xi_1,\Xi_2)\gets \texttt{Bin}[i,x,r,b,t][j].\texttt{triple}\)
            \State \((X_0,X_1,X_2)\gets \texttt{Bin}[i,x,r,b,t][j^{\star}].\texttt{triple}\)
            \If{\(\Xi_0=0\) \textbf{or} \(\Xi_1^2\ne \Xi_0\Xi_2\) \textbf{or} \(\Xi_1/\Xi_0\ne h_{\mathrm{id}}(v)\)}
              \State \textbf{continue}
            \EndIf
            \If{\(X_0=0\) \textbf{or} \(X_1^2\ne X_0X_2\) \textbf{or} \(X_1/X_0\ne h_{\mathrm{id}}(w)\)} \Comment{Gate 2: bin 1-sparse \& decode}
              \State \textbf{continue}
            \EndIf
            \State \((A_v,B_v,C_v)\gets \texttt{Slots}[x][sidx_v]\);\quad \((A_w,B_w,C_w)\gets \texttt{Slots}[x][sidx_w]\)
            \If{\textbf{not}\(\,(A_v\ne 0 \text{ and } B_v^2= A_vC_v \text{ and } B_v/A_v= h_{\mathrm{id}}(v))\,\)}
              \State \textbf{continue}
            \EndIf
            \If{\textbf{not}\(\,(A_w\ne 0 \text{ and } B_w^2= A_wC_w \text{ and } B_w/A_w= h_{\mathrm{id}}(w))\,\)} \Comment{Gate 3: slot guard}
              \State \textbf{continue}
            \EndIf
            \If{\(\{v,w\}\in E\)}
              \State \Return Triangle \((x,v,w)\) \Comment{Gate 4: explicit adjacency}
            \EndIf
          \EndFor
        \EndFor
      \EndFor
    \EndFor
  \EndFor
\EndFor
\State \Return \textbf{NO}
\end{algorithmic}
\end{algorithm}

\subsection*{A.5 Notes on certificates (group fields)}
In \texttt{ClassLogs}, every collision record carries \texttt{group\_id}\(=t\); the verifier’s coverage keys include \((i,r,t)\) (cf.\ \(\kappa_t\)), ensuring exact coverage per group. The space/time bounds and failure budgets are unchanged up to a \(\Theta(\log n)\) grouping factor already accounted for in the main text.


\section*{Appendix B: Full Proof of Lemma~9.4 (One–Shot Pairing Accounting)}
\label{app:lemma94}

\noindent\textbf{Lemma 9.4 (restated).} For any fixed layer $i$ and prefix level $r$, let $Q_{i,r}$ be the number of \emph{executed pair checks} in \textsc{Query\_Triangle} and let
\[
\mathcal S_{i,r}\ :=\ \big\{(i,x,r,b):\text{nonempty class after \textsc{Build\_Sketches}}\big\}.
\]
Then, with parameters from Table~1, the following holds \emph{simultaneously for all $(i,r)$ with failure probability $\le n^{-c}$}:
\[
Q_{i,r}\ \le\ C\cdot |\mathcal S_{i,r}|\qquad\text{for an absolute constant }C.
\]
Combined with Corollary~6.3 (which gives $|\mathcal S_{i,r}|=O(m)$ w.h.p.), this yields $Q_{i,r}=O(m)$ for each $(i,r)$ and hence a total of $O(m\log^2 n)$ checks over all $i\le c_R\log n$ and $r\le O(\log n)$.

\paragraph{Objects, triggers, and non-adaptivity (fixed $(i,r)$).}
Each nonempty class $C=(i,x,r,b)$ maintains a bucketed PK map with $T_{i,r}=\Theta(\log n)$ bins and an involution $u\mapsto\overline u$. A \emph{collision} in $C$ occurs when $u$ and $\overline u$ are both occupied; the \emph{one–shot} rule marks $(u,\overline u)$ upon the first collision so it never triggers again in $C$. A \emph{pair check} (the atomic unit) consists of: (i) class-level 1-sparse necessity on two classes, (ii) two slot-level 1-sparse sufficiency tests for the witnesses, (iii) one adjacency probe. Trigger rules (Sec.~\ref{sec:one-shot}): (T1) both classes nonempty; (T2) exchange-symmetric canonicalization and \emph{single-shot} per unordered pair; (T3) self-pairs skipped; (T4) \emph{non-adaptive} eligibility: depends only on Seeds and \textsc{Build\_Sketches} activation bits.

\subsection*{Step I: Canonical injection (no double charge)}
Map each executed check involving classes $C,C'$ to the unordered pair $\{C,C'\}$. By (T2) the canonical key
\[
\kappa(i,r;C,C')\ :=\ (i,\ r,\ \min\{(x,b),(v,b')\},\ \max\{(x,b),(v,b')\})
\]
is computed identically at both endpoints and inserted at most once; hence
\begin{equation}
\label{eq:inj}
\textit{each unordered pair $\{C,C'\}$ can be charged by at most one executed check.}
\end{equation}

\subsection*{Step II: Constant partner frontier per class}
For a fixed class $C$, let $X_C$ be the number of \emph{materialized} items (slot-1-sparse, prefix-matching edges) that contributed to $C$, and $Y_C$ the number of \emph{registered} complementary-bin collisions in $C$ under the one–shot rule. Each collision consumes two previously unused bins, so
\begin{equation}
\label{eq:consume}
Y_C\ \le\ \Big\lfloor \tfrac12\,\#\{\text{occupied PK bins in }C\}\Big\rfloor\ \le\ \Big\lfloor \tfrac{X_C}{2}\Big\rfloor.
\end{equation}
Every collision in $C$ can induce \emph{at most one} executed check (by \eqref{eq:inj} and (T2)), hence the number of distinct partner classes triggered \emph{from $C$} is $\le Y_C$.

\paragraph{Load in the constant-window.}
Write $R_{x,i}$ for kept out-edges at anchor $x$ in layer $i$ and let $\mu_{x,i,r}=\mathbb E[R_{x,i}]/2^r$. By construction of the level horizon $L_{x,i}$ (Sec.~\ref{sec:one-shot}) and the schedule $\{p_i\}$ (Table~1),
\begin{equation}
\label{eq:mu-window}
\mu_{x,i,r}\ \le\ 2\qquad\text{for all }r\in\{0,\dots,L_{x,i}\}.
\end{equation}
Conditioned on $R_{x,i}$, the number of prefix matches is $\mathrm{Bin}(R_{x,i},2^{-r})$; slot 1-sparseness and decoding hold with constant probability (Sec.~\ref{sec:hitting}, Gate (i)), so $X_C$ is stochastically dominated by $\mathrm{Bin}(R_{x,i},q/2^{r})$ for a fixed $q\in(0,1)$. Using \eqref{eq:mu-window} and Bernstein/Chernoff under $k$-wise independence ($k=\Theta(\log n)$) or via Poissonization (Appendix~E), there is an absolute $C_0$ such that
\begin{equation}
\label{eq:XC-tail}
\Pr\big[X_C> C_0\big]\ \le\ n^{-(c+5)}.
\end{equation}
Combining \eqref{eq:consume} and \eqref{eq:XC-tail}, with probability $\ge 1-n^{-(c+5)}$,
\[
\deg^+(C)\ :=\ \#\{\text{distinct partners triggered from }C\}\ \le\ Y_C\ \le\ \lfloor C_0/2\rfloor\ =:\ C'.
\]

\subsection*{Step III: No "third-vertex" cascades}
Triggers are keyed by \emph{class pairs} (not by edges); all witnesses mapping to the same unordered $\{C,C'\}$ yield the same $\kappa$ and are coalesced by (T2). Thus multiple edges from $C$ to the \emph{same} partner class $C'$ do not create multiple checks; distinct partners are already bounded by $C'$.

\subsection*{Step IV: Summation and the per-$(i,r)$ bound}
Charge each executed check to its lexicographically earlier endpoint (the trigger source). Summing the per-class frontier,
\[
Q_{i,r}\ \le\ \sum_{C\in\mathcal S_{i,r}} \deg^+(C)\ \le\ C'\cdot |\mathcal S_{i,r}|.
\]
This is a \emph{pointwise} inequality on the event that all classes satisfy $X_C\le C_0$.

\subsection*{Step V: Bad events and global failure budget}
Let $\mathcal B_{i,r}$ be the union of:
\begin{itemize}\itemsep 2pt
\item \textbf{B1 (per-class overload):} some $C$ has $X_C>C_0$. By \eqref{eq:XC-tail} and $|\mathcal S_{i,r}|\le O(m)$ w.h.p.\ (Cor.~6.3),
\[
\Pr[\text{B1 at }(i,r)]\ \le\ O(m)\cdot n^{-(c+5)}\ \le\ n^{-(c+3)}\quad(\text{for }m\le n^2).
\]
\item \textbf{B2 (dedup failure):} two different unordered pairs share $\kappa$. Impossible: we store and compare the \emph{full} canonical 4-tuple key (the hash is only an index), hence $\Pr[\mathrm{B2}]=0$ in Word-RAM.
\item \textbf{B3 (adaptivity):} eligibility depends on previous outcomes. Excluded by design (T4), hence probability $0$.
\item \textbf{B4 (algebraic false pass):} a multi-item class/slot spuriously passes the 1-sparse identity. The per-instance risk is $\le 1/P$; with $P=n^\kappa$ (Table~1) and a union bound over all slots/classes across all $(i,r)$ this contributes $\le n^{-(c+3)}$.
\end{itemize}
Therefore $\Pr[\mathcal B_{i,r}]\le 2n^{-(c+3)}$. Union-bounding over at most $c_R c_T \log^2 n$ index pairs $(i,r)$,
\[
\Pr\Big[\exists(i,r):\ Q_{i,r}> C\,|\mathcal S_{i,r}|\Big]\ \le\ c_R c_T \log^2 n\cdot 2n^{-(c+3)}\ \le\ n^{-c}
\]
for sufficiently large $n$ and the constants of Table~1. On the complement event, the bound holds \emph{simultaneously} for all $(i,r)$, completing the proof.\hfill$\square$

\paragraph{Remarks (compatibility with Table~1).}
(i) The degree-2 identities used in the class/slot 1-sparse tests require only \emph{2-wise} independence; concentration for keeps and class loads uses either Poissonization or $k$-wise independence with $k=\Theta(\log n)$. (ii) The bound is expectation-tight up to constants: $\mathbb E[Q_{i,r}]\le C\,\mathbb E[|\mathcal S_{i,r}|]+n^{-\omega(1)}$. (iii) The argument is purely combinatorial once Seeds are fixed; no step depends on outcomes of other checks (non-adaptivity).


\section*{Appendix C: Probability and Independence Details}
\label{app:prob-indep}

This appendix pins down the exact randomness used, two interchangeable concentration routes (Poissonization vs.\ limited independence), and an \emph{explicit} failure–probability budget. Throughout, arithmetic is over a prime field $\mathbb F_P$ with $P=n^{\kappa}$ and $k=c_k\log n$ as in Table~1; all logs are base~2. Non-adaptivity (Seeds fixed before any scan) is enforced globally.

\paragraph{Proposition C.0 (Global failure budget — explicit).}
\emph{With parameters from Table~1 (prime $P=n^{\kappa}$ and $k=c_k\log n$) there exists a choice of constants such that all concentration and algebraic events invoked in the paper hold \emph{simultaneously} with probability $\ge 1-n^{-c}$. In particular, all "w.h.p." statements in the main text mean failure $\le n^{-c}$.}

\subsection*{C.1  Inventory of randomness (minimal requirements)}
We instantiate the following mutually independent families (per layer $i$ unless noted):
\begin{itemize}\itemsep 2pt
\item \textbf{ID hash} $h_{\mathrm{id}}:V\to\mathbb F_P^*$: \emph{2-wise independent}, uniform on $\mathbb F_P^*$. Used only by 1-sparse identities.
\item \textbf{Signs} $s(e)\in\{\pm1\}$: \emph{2-wise independent}. Symmetrizes slot/class sums.
\item \textbf{Slot hash} $h_{\mathrm{slot}}(x,y,i)\in [M_x]$: \emph{2-universal}. Keeps the 1-sparse chance per retained edge constant.
\item \textbf{Prefix keys} $K_i(\cdot)$: per-vertex bitstrings with \emph{2-wise independent} coordinates; yields $\Pr[\mathrm{LCP}\ge t]=2^{-t}$.
\item \textbf{Pair-keys} $H_i(\cdot)$ and PK bucket hash $h_{\mathrm{pk}}$: \emph{2-universal}. Inside a constant-load class, any fixed ordered pair lands in complementary bins with probability $\Theta(1/T_{i,r})$ where $T_{i,r}=c_T\log n$.
\item \textbf{Keep-coins} $c_i(e)\in\{0,1\}$ with $\Pr[c_i(e)=1]=p_i$: either (A) fully independent (or via Poissonization), or (B) \emph{$k$-wise independent} with $k=c_k\log n$. These are the \emph{only} variables for which we need Chernoff/Bernstein-type tails.
\end{itemize}
All eligibility decisions (what is materialized and what pairs are \emph{eligible} to be checked) depend only on Seeds and \textsc{Build\_Sketches} activation bits; outcomes of previous checks never affect future eligibility.

\subsection*{C.2  Why 2-wise independence suffices for zero-FP gates}
Two places require only pairwise independence/universality.

\paragraph{(i) 1-sparse degree-2 identity.}
For a slot/class with triple $(\Sigma_0,\Sigma_1,\Sigma_2)=\sum_e (A_e, A_e x_e, A_e x_e^2)$ over $\mathbb F_P$,
\[
\Sigma_0\Sigma_2-\Sigma_1^2\;=\;\sum_{e<f}A_eA_f\,(x_e-x_f)^2.
\]
If at least two distinct IDs occur, fix one such pair $(e^\star,f^\star)$ and condition on the others; the RHS is a nonzero quadratic in $(x_{e^\star},x_{f^\star})$. With two \emph{independent uniform} draws in $\mathbb F_P^*$, the coincidence probability is $\le 2/P$. Absorbing the constant gives the $\le 1/P$ bound used in the text.

\paragraph{(ii) Slot collisions and PK bucketing.}
With 2-universal $h_{\mathrm{slot}}$ and $M_x=16\,d(x)$, a kept $(x\!\to\!y)$ has expected colliders $\le 1/16$, so $\Pr[\text{1-sparse at slot}]\ge e^{-1/16}$. With 2-universal $h_{\mathrm{pk}}$ into $T_{i,r}=c_T\log n$ bins, any fixed ordered pair hits complementary bins with probability $\Theta(1/T_{i,r})$, independent of slot/prefix randomness.

\subsection*{C.3  Route A: Poissonization $\Rightarrow$ Chernoff $\Rightarrow$ de-Poissonization}
\textbf{Poissonization.} For anchor $x$ and layer $i$, let $N_{x,i}\sim\mathrm{Poi}(\lambda_{x,i})$ with $\lambda_{x,i}:=d^+(x)p_i$ be the kept-out degree. Standard Poisson tails give, for $\delta>0$,
\[
\Pr[N_{x,i}\ge(1+\delta)\lambda_{x,i}] \le e^{-\lambda_{x,i}\psi(\delta)},\qquad
\Pr[N_{x,i}\le(1-\delta)\lambda_{x,i}] \le e^{-\lambda_{x,i}\delta^2/2},
\]
with $\psi(\delta)=(1+\delta)\ln(1+\delta)-\delta$.

\textbf{Class loads via splitting.} Conditioned on $N_{x,i}$, the count in the anchor-prefix at level $r$ is $\mathrm{Poi}(\lambda_{x,i}/2^r)$. Since $2^{L_{x,i}}\asymp d(x)p_i$, every instantiated $r\le L_{x,i}$ has mean $O(1)$, hence exponentially decaying tails. Summing over $(x,i)$ yields Cor.~6.3.

\textbf{De-Poissonization.} For $R_{x,i}\sim\mathrm{Bin}(d^+(x),p_i)$ and $\mu=d^+(x)p_i$,
\[
\Pr[R_{x,i}\ge \mu+t]\ \le\ 2\,\Pr[\mathrm{Poi}(\mu)\ge \mu+t],
\]
and similarly for prefix splits. All Poisson tails thus transfer (up to constants) to the binomial setting.

\subsection*{C.4  Route B: $k$-wise independence with explicit $k$}
Let $X=\sum_{j=1}^N X_j$ with $X_j\in\{0,1\}$, $k$-wise independent, $\mu=\mathbb E[X]$. The moment method gives for integer $t\ge 1$:
\[
\Pr\big[|X-\mu|\ge \Delta\big]\ \le\ \frac{\mathbb E[(X-\mu)^{2t}]}{\Delta^{2t}}
\ \le\ \frac{(c\,\mu t)^{t}}{\Delta^{2t}}
\]
for an absolute $c>0$. Taking $t=\lfloor k/2\rfloor$ and $\Delta=\Theta(\sqrt{k\mu})$ yields sub-Gaussian tails up to $\Theta(\sqrt{k\mu})$. With $k=c_k\log n$:
\[
\Pr\big[|X-\mu|\ge c_0\sqrt{\mu\log n}\big]\ \le\ n^{-\Omega(1)}.
\]
Multiplicative Chernoff-type bounds also hold when $k=\Theta(\varepsilon^2\mu)$. In our uses, $\mu$ is either $\Theta(1)$ (class loads) or $\Theta(d^+(x))$ (per-vertex totals), so $k=c_k\log n$ suffices. Explicit $k$-wise families use degree-$(k{-}1)$ polynomials over $\mathbb F_P$ with seed length $O(k\log n)$ and $O(k)$ evaluation.

\subsection*{C.5  Failure–probability budget (one-line entries)}
Each budget line below is stated \emph{post–Table~1} and totals to $\le n^{-c}$.

\begin{itemize}\itemsep 2pt
\item \textbf{(a) Algebraic false positives (degree-2 tests).} Per multi-item slot/class, false pass $\le 1/P$. With $P=n^{\kappa}$ and $\kappa$ large enough, a union bound over all tests (at most $O(m\log n)$ w.h.p., pessimistically $\le n^3$) contributes $\le n^{-(c+3)}$.
\item \textbf{(b) Keeps and per-anchor totals.} By Route~A or Route~B with $k=c_k\log n$, tail failures across all $(x,i)$ sum to $\le n^{-(c+3)}$.
\item \textbf{(c) Class activations and loads.} For all $(i,r)$ and all classes, $\Pr[\text{load}>C_0]\le n^{-(c+5)}$; union over $O(m\log n)$ classes (w.h.p.) gives $\le n^{-(c+3)}$.
\item \textbf{(d) PK overload per class.} \#occupied bins $>C_0$ has the same budget as (c); via one–shot pairing, pairings per class $\le C_0/2$ except with probability $\le n^{-(c+3)}$ in total.
\item \textbf{(e) Dedup/key collisions or adaptivity.} The canonical 4-tuple key is stored and compared verbatim (hash used only as an index), so probability $0$. Eligibility is non-adaptive by design, so probability $0$.
\end{itemize}
\noindent Summing (a)–(e) and union-bounding over at most $O(\log^2 n)$ layer/level indices yields an aggregate failure probability $\le n^{-c}$.

\subsection*{C.6  Independence layout and why unions are clean}
\begin{itemize}\itemsep 2pt
\item \textbf{Orthogonal seeds.} $h_{\mathrm{id}},h_{\mathrm{slot}},K_i,H_i,h_{\mathrm{pk}}$, and keep-coins draw from disjoint seed families; predicates using one family do not bias the others.
\item \textbf{Non-adaptivity.} Seeds are fixed prior to scans; \textsc{Build\_Sketches} and \textsc{Query\_Triangle} never gate future \emph{eligibility} on past outcomes. Hence the random universe for all unions is measurable with respect to Seeds alone.
\end{itemize}

\paragraph{Takeaway.}
All algebraic gates need only \emph{2-wise} independence (degree-2). All workload tails come from either Poissonization or $k$-wise independence with $k=c_k\log n$. With $P=n^{\kappa}$ and $k=c_k\log n$ (Table~1), the paper’s events hold simultaneously with failure $\le n^{-c}$, eliminating any "floating w.h.p." in the main text.


\section*{Appendix D: Constructing the ``Should--Check Domain'' for NO Verification}
\label{app:should-check}

This appendix gives the \emph{deterministic} reconstruction, from \((G,\texttt{Seeds})\) alone, of the exact set of class--internal collisions that the online algorithm \textsc{Query\_Triangle} is obligated to examine. We call this set the \emph{should--check domain} \(\mathcal Q\). We then prove that \(\mathcal Q\) matches the union of collision entries serialized in \texttt{ClassLogs} (\emph{coverage equality}), and that this yields an \(\tilde O(m\log n)\) NO verifier. All statements below use parameters from Table~1 and carry failure probability \(\le n^{-c}\).

\subsection*{D.1 Deterministic primitives fixed by \texttt{Seeds}}
From \texttt{Seeds}, the verifier regenerates: prime \(P\), schedule \(\{p_i\}\), layer count \(I\), orientation, and the public functions
\[
h_{\mathrm{id}},\ s,\ h_{\mathrm{slot}},\ K_i,\ H_i,\ \{g^{(t)}_{i,r}\}_{t=1}^{R},\ c_i,
\]
together with the deterministic probe lists \(\textsc{ProbedPairs}(i,x,r,b,t)\subseteq [T_{i,r}]\times[T_{i,r}]\) for each group \(t\in\{1,\dots,R\}\), and the complement involution \(j\mapsto j^{\star}\equiv (-j)\bmod T_{i,r}\).

\subsection*{D.2 Group-aware ten-line pseudocode for \(\mathcal Q\)}

\begin{algorithm}[H]
\caption{\textsc{BuildShouldCheckDomain} (group-aware)}
\label{alg:build-q}
\begin{algorithmic}[1]
\Require Graph \(G\), \texttt{Seeds}
\Ensure The should--check domain \(\mathcal Q=\bigcup_{t}\mathcal Q^{(t)}\)
\State Regenerate all hashes/coins/PRFs and configuration from \texttt{Seeds}
\State Initialize slot arrays \(\texttt{Slots}[x][s]\leftarrow(0,0,0)\) for all anchors \(x\) and slots \(s\)
\State Initialize per-class activity \(\texttt{Active}[i,x,r,b]\leftarrow 0\)
\State Initialize per-class, per-group bin multiplicities \(\texttt{Count}[i,x,r,b,t][j]\leftarrow 0\) and witnesses \(\texttt{Wit}[i,x,r,b,t][j]\leftarrow \emptyset\)
\ForAll{directed edges \(e=(x\to y)\in E\)}
  \For{\(i\gets 1\) \textbf{to} \(I\)}
    \If{\(c_i(e)=0\)} \State \textbf{continue} \EndIf
    \State \(sidx\gets h_{\mathrm{slot}}(x,y,i)\); \(id\gets h_{\mathrm{id}}(y)\); \(sgn\gets s(x,y,i)\)
    \State Update \(\texttt{Slots}[x][sidx]\) by \((A,B,C)\leftarrow (A{+}sgn,\ B{+}sgn\cdot id,\ C{+}sgn\cdot id^2)\)
    \State Re-read \((A,B,C)\). If \(A=0\) \textbf{or} \(B^2\ne AC\) \textbf{or} \(B/A\ne id\): \textbf{continue} \Comment{slot 1-sparse \& decode}
    \For{\(r\gets 0\) \textbf{to} \(L_{x,i}\)}
      \If{\(\mathrm{pref}_r(K_i(x)) \ne \mathrm{pref}_r(K_i(y))\)} \State \textbf{continue} \EndIf
      \State \(b\gets \mathrm{pref}_r(K_i(x))\); \(\texttt{Active}[i,x,r,b]\gets 1\)
      \For{\(t\gets 1\) \textbf{to} \(R\)}
        \State \(\Delta \gets (H_i(y)-H_i(x))\bmod P\); \(j\gets g^{(t)}_{i,r}(\Delta)\); \(j^{\star}\gets (-j)\bmod T_{i,r}\)
        \If{\((j,j^{\star})\notin \textsc{ProbedPairs}(i,x,r,b,t)\)} \State \textbf{continue} \EndIf
        \State \(\texttt{Count}[i,x,r,b,t][j]\gets \texttt{Count}[i,x,r,b,t][j]+1\)
        \If{\(\texttt{Wit}[i,x,r,b,t][j]=\emptyset\)} \State \(\texttt{Wit}[i,x,r,b,t][j]\gets (x\!\to\!y,\ sidx,\ \texttt{group\_id}=t)\) \EndIf
      \EndFor
    \EndFor
  \EndFor
\EndFor
\State \(\mathcal Q \gets \emptyset\)
\For{\(i\gets 1\) \textbf{to} \(I\)}%
  \ForAll{\((x,r,b)\) with \(\texttt{Active}[i,x,r,b]=1\)}
    \For{\(t\gets 1\) \textbf{to} \(R\)}
      \State \(\mathcal Q^{(t)}_{i,x,r,b}\gets \emptyset\)
      \ForAll{\((j,j^{\star})\in \textsc{ProbedPairs}(i,x,r,b,t)\)}
        \If{\(j\ne j^{\star}\)}
          \If{\(\texttt{Count}[i,x,r,b,t][j]>0\) \textbf{and} \(\texttt{Count}[i,x,r,b,t][j^{\star}]>0\)}
            \State Insert canonical key \((i,x,r,b,t,\min\{j,j^{\star}\})\) into \(\mathcal Q^{(t)}_{i,x,r,b}\)
          \EndIf
        \Else \Comment{fixed point bin}
          \If{\(\texttt{Count}[i,x,r,b,t][j]\ge 2\)}
            \State Insert \((i,x,r,b,t,j)\) into \(\mathcal Q^{(t)}_{i,x,r,b}\)
          \EndIf
        \EndIf
      \EndFor
      \State \(\mathcal Q \gets \mathcal Q \cup \mathcal Q^{(t)}_{i,x,r,b}\)
    \EndFor
  \EndFor
\EndFor
\State \Return \(\mathcal Q\)
\end{algorithmic}
\end{algorithm}

The routine reconstructs \(\mathcal Q\) in \(\tilde O(m\log n)\) time. It enumerates groups \(t\) explicitly and forms \(\mathcal Q=\bigcup_{t=1}^{R}\mathcal Q^{(t)}\).

\paragraph{State.} Slots are arrays of triples; a class \(C=(i,x,r,b)\) is \emph{materialized} iff \(\texttt{Active}[i,x,r,b]=1\). For each group \(t\), the verifier holds per-class bin multiplicities \(\texttt{Count}[i,x,r,b,t][j]\) over bins that appear in \(\textsc{ProbedPairs}(i,x,r,b,t)\), plus one witness per occupied bin.

\subsection*{D.3 Equality of reconstructed and online states (group-aware)}
\begin{lemma}[Slot determinism]\label{lem:D-slot}
For every \((i,x,s)\) the replayed slot triple equals the online triple; hence the 1-sparse test and the decoded ID coincide. \emph{(parameters from Table~1, failure \(\le n^{-c}\))}
\end{lemma}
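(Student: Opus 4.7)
The plan is to show that each slot triple is a \emph{seed-measurable} function of $(G,\texttt{Seeds})$ alone, and then observe that \textsc{Build\_Sketches} and \textsc{BuildShouldCheckDomain} both evaluate this same function. Fix $(i,x,s)$. Because \texttt{Seeds} are sampled once before any scan (non-adaptivity, Sec.~3), both procedures regenerate identical copies of $h_{\mathrm{slot}}$, $h_{\mathrm{id}}$, $s$, and $c_i$. I will then define
\[
E_{i,x,s}\;:=\;\{\,e=(x\!\to\!y)\in E\,:\,c_i(e)=1\;\text{and}\;h_{\mathrm{slot}}(x,y,i)\bmod M_x = s\,\},
\]
which is a deterministic function of $(G,\texttt{Seeds})$ and is therefore the same set whether one consults the online state or the verifier's reconstruction.

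Next I would inspect the slot-update line in Algorithm~A.3 (and its twin in Algorithm~D.2) and note that the update occurs \emph{before} any downstream gate: every retained edge adds the triple $(sgn,\,sgn\cdot id,\,sgn\cdot id^2)$ with $sgn=s(x,y,i)$ and $id=h_{\mathrm{id}}(y)$ to \texttt{Slots}$[x][s]$, regardless of whether the slot is subsequently found 1-sparse or whether prefix matching succeeds. Consequently, both procedures contribute the same signed unit triple for each $e\in E_{i,x,s}$; edges outside $E_{i,x,s}$ contribute nothing.

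The final step uses commutativity and associativity of addition in $\mathbb F_P$: the two procedures may visit edges in different orders (the online pass streams $E$; the verifier may scan by anchor and layer), but the accumulated sum
\[
(A,B,C)\;=\;\sum_{e\in E_{i,x,s}}\big(s(x,y,i),\ s(x,y,i)\,h_{\mathrm{id}}(y),\ s(x,y,i)\,h_{\mathrm{id}}(y)^2\big)\pmod P
\]
is order-independent. Hence the online and replayed triples coincide pointwise. The 1-sparse test $B^2=AC\wedge A\neq0$ is a deterministic predicate on $(A,B,C)$, and the decoded identifier $\widehat{\mathrm{id}}=B/A$ is a deterministic function, so both the Boolean outcome and (when applicable) the decoded ID agree between the two passes.

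There is essentially no probabilistic content to the equality itself: the "failure $\le n^{-c}$" annotation is inherited from the surrounding budget and applies only to the \emph{soundness} of the 1-sparse predicate on mixed slots (Lemma~5.1), not to the replayability. The main (minor) obstacle I anticipate is purely notational—making sure that the order of the three operations in the algorithms (update $\to$ re-read $\to$ gate) is literally the same in both pseudocodes, so that one cannot object that the online pipeline "skips" an update on a failed slot while the verifier performs it (or vice versa); a side-by-side line alignment of Algorithms~A.3 and~D.2 suffices to close this gap.
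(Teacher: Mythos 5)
Your proposal is correct and matches the paper's (implicit) reasoning: the paper states Lemma~\ref{lem:D-slot} without any written proof, treating it as immediate from seed determinism and non-adaptivity, and your argument --- seed-measurability of the contributing edge set $E_{i,x,s}$, the fact that the slot update precedes every gate, and commutativity/associativity of addition in $\mathbb F_P$ --- is exactly the formalization the paper leaves to the reader. Your closing remark is also on point: the equality of triples is fully deterministic (the $n^{-c}$ annotation is inherited boilerplate), and the one genuine thing to check is the line-by-line alignment of the update-then-gate order in Algorithms~A.3 and~D.2, where the only discrepancy is cosmetic (A.3 writes $h_{\mathrm{slot}}(x,y,i)\bmod M_x$ while D.2 omits the explicit $\bmod M_x$).
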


\begin{lemma}[Class materialization]\label{lem:D-class}
A class \(C=(i,x,r,b)\) is materialized in replay iff it is materialized online, and its class triple \(\Sigma(C)\) matches. \emph{(parameters from Table~1, failure \(\le n^{-c}\))}
\end{lemma}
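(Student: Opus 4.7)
The plan is to establish both agreement claims by a synchronous induction on the deterministic streaming scan of $E$, piggy-backing on Lemma~\ref{lem:D-slot} (slot determinism) and the non-adaptivity of all primitives recorded in Seeds (Appendix~C.1). Since both \textsc{Build\_Sketches} and \textsc{BuildShouldCheckDomain} regenerate $h_{\mathrm{id}},h_{\mathrm{slot}},K_i,H_i,s$, the probe lists $\textsc{ProbedPairs}$, and the keep-coins $c_i$ from the same Seeds, and since both iterate $E$ in the same orientation-fixed order, the two executions traverse identical code paths on identical inputs apart from the very slot-level state that Lemma~D.1 already synchronizes.

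\paragraph{Joint invariant and induction.}
I would carry a joint invariant maintained after each processed edge: for every $(i,x,s,r,b)$, the slot triple $\texttt{Slots}[x][s]$ agrees in online and replay (Lemma~D.1), the bit $\texttt{Active}[i,x,r,b]$ agrees, and the class triple $\texttt{ClassSigma}[i,x,r,b]$ agrees. The base case is trivial since all three start at zero/unset. For the step, fix an edge $e=(x\!\to\!y)$ at layer $i$: (a) the shared keep-coin $c_i(e)$ forces both runs to skip or proceed together; (b) identical $sidx,id,sgn$ yield identical slot updates, preserving Lemma~D.1; (c) the slot 1-sparse gate reads only the post-update triple and $id$, producing the same accept/continue decision; (d) the prefix-match test on $K_i$ agrees, producing the same bit $b$; (e) $\texttt{Active}[i,x,r,b]$ is idempotently set to $1$ in both runs on the first qualifying edge; (f) the class-triple increment adds the current $(A,B,C)=\texttt{Slots}[x][sidx]$, which by the invariant already agrees, so $\texttt{ClassSigma}[i,x,r,b]$ remains in sync after the update. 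Closing the invariant at the end of the scan gives the lemma.

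\paragraph{Main obstacle and closure.}
The one non-trivial point is step (f): the class-triple update reads the slot's \emph{cumulative} state rather than the edge's intrinsic contribution $(\pm1,\pm id,\pm id^2)$, so Lemma~D.1 must be invoked \emph{at the exact moment} of each class-level update, not merely at termination. Promoting slot agreement into the inductive invariant (rather than using Lemma~D.1 as a one-shot terminal statement) handles this cleanly. Beyond this bookkeeping, the argument reduces to functional equivalence of identical code on identical inputs. Note that a spurious slot 1-sparse pass for a multi-item slot (per-event probability $\le P^{-1}$) necessarily happens identically in both runs because they share Seeds, so the agreement is preserved unconditionally; the $n^{-c}$ qualifier on the lemma statement pertains to the global correctness envelope (Appendix~C.5) rather than to the match itself, which is deterministic given Seeds.
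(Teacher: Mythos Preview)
Your proof is correct and is precisely the natural formalization the paper implicitly relies on; the paper states Lemmas~D.1--D.3 without proof, treating them as evident determinism facts (same Seeds, same scan order, same code path). Your identification of the one genuinely subtle point---that the class-triple increment in step~(f) reads the \emph{cumulative} slot state rather than the edge's intrinsic contribution, so slot agreement must hold at every intermediate step and not merely terminally---is apt and is exactly why the scan order must be fixed in replay (as the paper mandates in \textsc{Verify\_NO}). Your closing remark that the online/replay match is unconditional given Seeds, with the $n^{-c}$ qualifier inherited from the global envelope rather than from any probabilistic content in this particular comparison, is also correct.
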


\begin{lemma}[Per-group PK occupancy]\label{lem:D-pk}
For each materialized \(C\) and each group \(t\), the occupied-bin multiset \(B_t(C)\) and multiplicities \(M_t(C,j)\) equal the online ones (up to the arbitrary choice of a single stored witness per bin). \emph{(parameters from Table~1, failure \(\le n^{-c}\))}
\end{lemma}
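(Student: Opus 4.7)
The plan is to reduce equality of the occupancy $B_t(C)$ and multiplicities $M_t(C,j)$ to a deterministic, edge-by-edge matching of insertion events, using Lemmas~D.1 and~D.2 as black boxes. Fix a materialized class $C=(i,x,r,b)$ and a group $t\in[R]$. An edge $e=(x\!\to\!y)$ contributes to bin $j$ of $\texttt{Bin}[i,x,r,b,t]$ in either the online run or the verifier's replay iff the following four predicates, each a pure function of $(G,\texttt{Seeds})$, all fire: (a) the keep-coin $c_i(e)=1$; (b) the slot 1-sparse filter accepts $e$, which by Lemma~D.1 is the same Boolean in both runs; (c) $\mathrm{pref}_r(K_i(x))=\mathrm{pref}_r(K_i(y))=b$; (d) $(j,j^{\star})\in\textsc{ProbedPairs}(i,x,r,b,t)$ where $j=g^{(t)}_{i,r}(\Delta_i(e))$ and $j^{\star}\equiv-j\bmod T_{i,r}$.

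First I would argue the inclusion $\subseteq$: every online insertion event at bin $j$ of $(C,t)$ satisfies (a)--(d), so the replay code in Algorithm~\ref{alg:build-q} increments $\texttt{Count}[i,x,r,b,t][j]$ on the same edge. Then I would argue $\supseteq$ symmetrically: every triggering edge in the replay passes (a)--(d) and hence must also have been inserted by Algorithm~\ref{alg:build}. Because both processes increment the same counter on the same multiset of triggering edges, $M_t(C,j)$ agrees pointwise in $j$, and $B_t(C)=\{j:M_t(C,j)>0\}$ follows. The witness field in each nonempty bin is written on the first triggering edge; when both runs share the same canonical scan order (outer edge loop, inner \texttt{for}-$i$ and \texttt{for}-$t$ as in Alg.~\ref{alg:build}/\ref{alg:build-q}) the stored witness coincides, and the lemma's caveat absorbs any residual tie-breaking freedom.

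The step I expect to be the main obstacle is ruling out corruption of the counts by the online \emph{one-shot pairing} discipline, which is absent in the replay. The key observation to develop carefully is that the \texttt{Paired}$[\cdot]$ flag only blocks \emph{registering a new collision record} into $\texttt{Collisions}[i,x,r,b,t]$; inspection of Algorithm~\ref{alg:build} shows that the bin-triple update $(\Xi_0,\Xi_1,\Xi_2)\mathrel{+}\!=(A,B,C)$ and the first-witness assignment guard only on passing (a)--(d) and (for the witness) on emptiness, \emph{not} on \texttt{Paired}. Hence one-shot pairing is occupancy-invariant, and the replay's unconditional counting faithfully mirrors the online per-bin mass.

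Finally I would fold the failure budget into the aggregate of Appendix~C: Lemma~D.3 introduces no new randomized gates beyond those charged by Lemmas~D.1--D.2 (the field equalities for slot 1-sparseness and the 2-universal choices behind $H_i,K_i,g^{(t)}_{i,r}$), so by the union bound of Section~C.5 the per-$(i,r,t,C)$ mismatch event is absorbed into the global $\le n^{-c}$ budget. Combined with the deterministic equality of predicates (a)--(d), this yields the stated agreement of $B_t(C)$ and $M_t(C,j)$ simultaneously over all materialized $C$ and all $t\in[R]$.
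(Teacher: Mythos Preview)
Your proposal is correct. The paper states Lemma~D.3 without proof, treating it (alongside Lemmas~D.1--D.2) as an immediate consequence of the fact that the verifier replays the same deterministic computation from the same \texttt{Seeds} and scan order; your write-up supplies exactly the details the paper elides. In particular, your identification of the crux---that the \texttt{Paired} flag in Algorithm~\ref{alg:build} gates only the append to \texttt{Collisions} and \emph{not} the bin-triple update or first-witness assignment---is the one non-obvious point, and you handle it correctly. One small sharpening: your invocation of Lemma~D.1 for predicate~(b) really needs the \emph{per-step} slot state (not just the final triple) to agree between runs, since the 1-sparse gate is evaluated mid-stream; this follows because both runs process the identical edge sequence in the identical order, which you do note when discussing the canonical scan order, but it is worth making explicit that Lemma~D.1's conclusion applies at every intermediate step, not merely at termination.
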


\subsection*{D.4 Defining the should--check domain by groups}
For a materialized class \(C=(i,x,r,b)\) and group \(t\), let \(B_t(C)\subseteq [T_{i,r}]\) be the set of occupied bins and \(M_t(C,j)\) their multiplicities. Define the per-group canonical pair set
\[
\mathcal P^{(t)}(C)\ :=\ \big\{\min\{j,j^{\star}\}:\ j\in B_t(C),\ j^{\star}\in B_t(C),\ j\ne j^{\star}\big\}\ \cup\ \big\{\,j:\ j=j^{\star},\ M_t(C,j)\ge 2\,\big\}.
\]
The per-group should--check domain is
\[
\mathcal Q^{(t)}\ :=\ \big\{\, (i,\ x,\ r,\ b,\ t,\ \beta)\ :\ C=(i,x,r,b)\ \text{materialized and}\ \beta\in\mathcal P^{(t)}(C)\ \big\},
\]
and the global domain is the disjoint union over groups:
\[
\boxed{\quad \mathcal Q\ =\ \bigcup_{t=1}^{R}\ \mathcal Q^{(t)}\quad}
\]
(keyed canonically by the 6-tuple \((i,x,r,b,t,\beta)\)).

\begin{lemma}[Domain identity]\label{lem:D-domain}
Let \(\mathcal Q^{\mathrm{on},(t)}\) be the canonical set of collision keys the online pass registers in group \(t\) (one--shot per complementary pair). Then for all \(t\), \(\mathcal Q^{(t)}=\mathcal Q^{\mathrm{on},(t)}\). Consequently, \(\mathcal Q=\mathcal Q^{\mathrm{on}}\). \emph{(parameters from Table~1, failure \(\le n^{-c}\))}
\end{lemma}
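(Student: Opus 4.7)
The plan is to reduce the set equality to a per-class, per-group, per-complementary-pair equivalence, and then to invoke the monotonicity of bin occupancy to argue that the online one--shot rule registers a collision at exactly the canonical keys enumerated by the replay. First I would invoke Lemmas~D.1--D.3 to assert that, under Seeds and on the high-probability event of the global failure budget, the \emph{final} per-group bin occupancy sets $B_t(C)$ and multiplicities $M_t(C,j)$ produced by \textsc{BuildShouldCheckDomain} coincide with those of \textsc{Build\_Sketches}. Because both $\mathcal Q^{(t)}$ and $\mathcal Q^{\mathrm{on},(t)}$ only aggregate keys of the form $(i,x,r,b,t,\beta)$ indexed over materialized classes and bins in $\textsc{ProbedPairs}(i,x,r,b,t)$, it suffices to show that within a fixed class $C$ and group $t$, the canonical pair set $\mathcal P^{(t)}(C)$ equals the set of complementary bin labels registered online.

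Next I would establish both inclusions. For $\mathcal Q^{\mathrm{on},(t)} \subseteq \mathcal Q^{(t)}$: any online registration of $(j,j^\star)$ requires, by the trigger condition in Algorithm~\ref{alg:build}, that the complementary witness slot was non-empty at the moment a new item arrived; since witnesses are never retracted, both $j$ and $j^\star$ are occupied at end of pass, so $\beta=\min\{j,j^\star\}\in\mathcal P^{(t)}(C)$. For the reverse inclusion: if $j\ne j^\star$ with both $j,j^\star\in B_t(C)$, let $e_1$ be the chronologically first retained edge that materializes in either $j$ or $j^\star$; when the first edge in the other bin arrives, the \texttt{Paired} flags are still \texttt{false} and the complementary witness is non-empty, so the online rule fires exactly once, producing the canonical key $\beta=\min\{j,j^\star\}$. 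The one--shot flag then prevents any further registration, so the multiplicity of the key in $\mathcal Q^{\mathrm{on},(t)}$ is one, matching the singleton insertion in the replay. The fixed-point case $j=j^\star$ is handled identically: the second arrival into bin $j$ triggers registration because at that moment $\texttt{Wit}[j]\ne\emptyset$ and $\texttt{Paired}[j]=\texttt{false}$.

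With the per-pair equivalence in hand, I would verify that the canonical keys agree. Both the online pass and the replay use the same canonicalization $\beta=\min\{j,j^\star\}$ and the same 6-tuple $(i,x,r,b,t,\beta)$; the group index $t$ is embedded in both, so there is no cross-group conflation. Summing the per-class, per-group equalities over all materialized $C$ and all $t\in[R]$ gives $\mathcal Q^{(t)}=\mathcal Q^{\mathrm{on},(t)}$ for every $t$ and hence $\mathcal Q=\mathcal Q^{\mathrm{on}}$. All of this is deterministic once Seeds are fixed, so the only probabilistic content is the conditioning on Lemmas~D.1--D.3, which holds with failure $\le n^{-c}$ by the budget of Appendix~C.

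The main obstacle is subtle but narrow: the argument depends critically on non-adaptivity, so I must ensure that the event "$j$ and $j^\star$ both become occupied at some point in the stream" is genuinely equivalent to "both are occupied at the end of the stream," which requires that bin contents are strictly additive (no deletions) and that the \texttt{Paired} flag blocks only \emph{further} registrations but not the admissibility of either bin as a witness store. A second delicate point is that the reconstruction algorithm, unlike the online pass, does not scan edges in arrival order, so I must certify that its multiplicity-based test $\texttt{Count}[\cdot]>0$ (or $\ge 2$ for fixed points) is \emph{order-invariant} and hence agrees with the online trigger condition on the final state. Once these two invariants are stated explicitly and verified against the pseudocode of Algorithms~\ref{alg:build} and~\ref{alg:build-q}, the remaining inclusions and union-bound over $(i,x,r,b,t)$ are routine.
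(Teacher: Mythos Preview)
Your proposal is correct and in fact more detailed than the paper, which states Lemma~D.4 without proof and only gestures at the argument in the proof sketch of Theorem~D.5 (``by Lemmas~D.1--D.3 the replay reconstructs the same $B_t(C)$ and complementary pairs as online''). Your two-inclusion argument via monotonicity of bin-witness occupancy---online registration implies both bins occupied at end, and both occupied at end implies a unique first-time-both-occupied moment at which the one--shot rule fires---is exactly what is needed to discharge the lemma.

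One small correction: the reconstruction in Algorithm~\ref{alg:build-q} \emph{does} scan edges in the same fixed order as the online pass (both iterate ``\textbf{for all} $e=(x\!\to\!y)\in E$'' under the deterministic orientation), so the slot gate and hence the materialized set are identical step-by-step, not merely at the end. The genuine order asymmetry you identify is that the replay applies its $\texttt{Count}[\cdot]>0$ test only \emph{after} the scan completes, whereas the online pass triggers during the scan; your monotonicity observation (witnesses are set once and never cleared, \texttt{Paired} only blocks re-registration) resolves this cleanly.
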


\subsection*{D.5 Coverage equality aligned with groups}
Let \(\mathrm{LogPairs}^{(t)}\) be the canonicalized union, over all classes, of the \((\texttt{group\_id}=t,\ j,\ j^{\star})\) entries in \texttt{ClassLogs}. Let \(\mathrm{LogPairs}:=\bigcup_t \mathrm{LogPairs}^{(t)}\).

\begin{theorem}[Coverage \(\Leftrightarrow\) Domain (group-aligned)]\label{thm:D-coverage}
For each \(t\), \(\mathrm{LogPairs}^{(t)}=\mathcal Q^{(t)}\) if and only if the logs cover \emph{exactly} the checks that the algorithm must perform in group \(t\) (no missing, no extra). Hence \(\mathrm{LogPairs}=\mathcal Q\). Any mismatch causes the verifier to reject. \emph{(parameters from Table~1, failure \(\le n^{-c}\))}
\end{theorem}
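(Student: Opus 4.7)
The plan is to reduce the claim to Lemma~D.5 (Domain identity) on a per-group basis, and then to verify that the serialization discipline used by \textsc{Emit\_Certificate} produces a canonical set that is identical to the online collision set within each group. First, I would fix an arbitrary group index $t\in[R]$ and show separately the two inclusions $\mathrm{LogPairs}^{(t)}\subseteq\mathcal Q^{(t)}$ and $\mathcal Q^{(t)}\subseteq\mathrm{LogPairs}^{(t)}$; the global identity $\mathrm{LogPairs}=\mathcal Q$ is then a disjoint union over $t$ because the canonical 6-tuple key $(i,x,r,b,t,\beta)$ encodes the group index explicitly, preventing any cross-group aliasing.

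For the forward inclusion $\mathrm{LogPairs}^{(t)}\subseteq\mathcal Q^{(t)}$, each record in $\texttt{ClassLogs}$ carrying $\texttt{group\_id}=t$ was inserted by line registering a collision in $\texttt{Collisions}[i,x,r,b,t]$ during \textsc{Build\_Sketches}; at insertion time both complementary bins $j$ and $j^\star$ were nonempty and the one-shot flags were unset, which by Lemma~D.3 (Per-group PK occupancy) replays deterministically from Seeds to nonzero counts $\texttt{Count}[i,x,r,b,t][j]$ and $\texttt{Count}[i,x,r,b,t][j^\star]$. Hence $\min\{j,j^\star\}\in\mathcal P^{(t)}(C)$, so the canonicalized key lies in $\mathcal Q^{(t)}$ by definition. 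The fixed-point case $j=j^\star$ with multiplicity $\ge 2$ is handled identically using the second clause in the definition of $\mathcal P^{(t)}$.

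For the reverse inclusion $\mathcal Q^{(t)}\subseteq\mathrm{LogPairs}^{(t)}$, take any canonical key $(i,x,r,b,t,\beta)\in\mathcal Q^{(t)}$. By Lemma~D.4 (Domain identity), $\mathcal Q^{(t)}=\mathcal Q^{\mathrm{on},(t)}$, so the online pass must have registered exactly one collision under this key in $\texttt{Collisions}[i,x,r,b,t]$ by the one-shot discipline: the first time both sides of $\{j,j^\star\}$ (or, in the fixed-point case, the bin with multiplicity reaching $2$) are simultaneously occupied, a record is appended and both $\texttt{Paired}$ flags are set, precluding further insertions for the same pair. Since \textsc{Emit\_Certificate} serializes every entry of every $\texttt{Collisions}[i,x,r,b,t]$ verbatim and the canonicalization in \textsc{Verify\_NO} collapses each record to the key $(i,x,r,b,t,\min\{j,j^\star\})$, the key appears at least once in $\mathrm{LogPairs}^{(t)}$; the one-shot discipline ensures it appears \emph{exactly} once.

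The "no missing, no extra" reading and the verifier's behavior then follow immediately: a missing key would be exhibited as an element of $\mathcal Q^{(t)}\setminus\mathrm{LogPairs}^{(t)}$ during the set-comparison step of Algorithm~\ref{alg:verify}, and an extra key as an element of $\mathrm{LogPairs}^{(t)}\setminus\mathcal Q^{(t)}$; in either case the test $\mathrm{LogPairs}\ne\mathcal Q$ triggers rejection. I expect the main obstacle to be purely bookkeeping: making the canonical key genuinely invariant under the in-class ordering so that identical complementary pairs reached from either side produce the same 6-tuple, and making explicit that hashing is used only as an array index (the full key is stored verbatim) so that no algebraic coincidence can conflate distinct canonical keys; the failure budget from Appendix~C.5 absorbs the $O(1/P)$ algebraic coincidences invoked only in Lemmas~D.1–D.3, and no additional probabilistic event is introduced by the coverage comparison itself, which is deterministic.
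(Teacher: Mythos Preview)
Your proposal is correct and follows essentially the same route as the paper's proof sketch: invoke the replay determinism lemmas (slot/class/PK occupancy) and the Domain identity lemma to match $\mathrm{LogPairs}^{(t)}$ with $\mathcal Q^{(t)}$, then take a disjoint union over $t$ via the explicit group index in the canonical key. Your write-up is in fact more explicit than the paper's sketch in separating the two inclusions and in tracing the one-shot discipline to the ``exactly once'' claim; the only blemish is a numbering slip (you cite the Domain identity lemma once as D.5 and once as D.4).
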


\begin{proof}[Proof sketch]
By Lemmas~\ref{lem:D-slot}--\ref{lem:D-domain}, the replay reconstructs, for each \(t\), the same \(B_t(C)\) and complementary pairs as online, modulo canonicalization and one--shot pairing. Therefore each required key appears once in \texttt{ClassLogs} with \(\texttt{group\_id}=t\) iff and only if it lies in \(\mathcal Q^{(t)}\). Taking a union over \(t\) yields \(\mathrm{LogPairs}=\mathcal Q\).
\end{proof}

\subsection*{D.6 Complexity and auditable NO}
\begin{proposition}[Verifier complexity]\label{prop:D-complexity}
Replaying classes and reconstructing all \(\mathcal Q^{(t)}\) costs \(\tilde O(m\log n)\) time and \(O(m\log n)\) space w.h.p. \emph{(parameters from Table~1, failure \(\le n^{-c}\))}
\end{proposition}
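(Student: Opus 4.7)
The plan is to mirror the time/space accounting from Sections~\ref{sec:concentration} and~\ref{sec:one-shot} (and Appendix~B) at the level of the verifier's replay, decomposing the work of Algorithm~\ref{alg:build-q} into three stages: (a) seed and primitive regeneration, (b) the streaming pass that rebuilds slots, class triples, and per-group bin multiplicities $\texttt{Count}[i,x,r,b,t][\cdot]$ with witnesses, and (c) the final enumeration over materialized classes and $\textsc{ProbedPairs}$ to emit $\mathcal Q=\bigcup_t \mathcal Q^{(t)}$. Since Appendix~D already establishes \emph{functional} equality (Lemmas~\ref{lem:D-slot}--\ref{lem:D-domain}) between replay and online state, the proof reduces to quantitatively charging each stage under the concentration events of Sections~\ref{sec:concentration} and~\ref{sec:one-shot}.

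For the time bound, first I would charge the streaming pass. By Lemma~6.1 the total number of kept $(e,i)$ events is $\sum_e K_e = O(m)$ w.h.p., each inducing an $O(1)$ slot update. The inner prefix loop touches only levels $r$ at which $\mathrm{pref}_r(K_i(x))=\mathrm{pref}_r(K_i(y))$, whose count equals $1+\mathrm{LCP}(K_i(x),K_i(y))$ and has a geometric tail; in expectation $O(1)$, and—via the Chernoff argument already used for $A_{x,i}$ in the proof of Corollary~6.3—summing across all kept events yields $O(m)$ (kept, matched-level) events w.h.p. Each such event then loops over $R=c_G\log n$ groups with $O(1)$ work per group (one PRF evaluation, a complement map, a $\textsc{ProbedPairs}$ membership test, and a multiplicity increment), contributing $O(mR)=O(m\log n)$ total. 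The enumeration stage visits the $O(m\log n)$ active classes (Cor.~6.3), scanning $R\cdot C_0=O(\log n)$ probed pairs apiece at $O(1)$ per pair, for an additional $O(m\log^2 n)=\tilde O(m\log n)$; the failure budget is absorbed into the standard Table~1 constants.

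The space bound requires a little care because holding all state simultaneously would be wasteful. I would argue as follows: slot arrays are allocated once with $\sum_x M_x=\Theta(m)$ words and reused across layers since each layer's slots are finalized before its class enumeration begins. The active bits and class triples together have $\sum_{i,x,r,b}\texttt{Active}[i,x,r,b]=O(m\log n)$ nonzero entries by Cor.~6.3, each of $O(1)$ words. The per-group bin tables are stored \emph{sparsely} (hash-keyed by $(i,x,r,b,t,j)$), so their size is governed by the number of \emph{occupied} bins; this is bounded above by the number of kept $(\text{edge},i,r,t)$ tuples, i.e.\ $O(m)\cdot O(R)=O(m\log n)$ w.h.p. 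Finally, $\mathcal Q$ is emitted incrementally on the fly (or checked against $\mathrm{LogPairs}$ as it is produced by the coverage audit), so it need not be resident beyond the current $(i,r,t)$ block; the peak resident state is $O(m\log n)$.

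The main obstacle will be the streaming-pass analysis: naively every kept edge could trigger $O(\log n)$ prefix-level matches and $O(\log n)$ group probes, suggesting a misleading $O(m\log^3 n)$ bound. The crux is to show that the geometric tail of $\mathrm{LCP}$ keeps the expected matched-level count at $O(1)$, so the $\log n$ factor from the prefix sweep collapses in expectation (with deviations absorbed by the same union bound already invoked for $A_{x,i}$), leaving a single $R=O(\log n)$ factor rather than two. A secondary, lighter issue is ensuring that the sparse bin state is implemented so its space genuinely scales with occupied rather than allocated bins; using hash maps keyed by $(i,x,r,b,t,j)$ achieves this within the Word-RAM cost model of Appendix~A, after which the stated $\tilde O(m\log n)$ time and $O(m\log n)$ space fall out by combining the charges above with the global failure budget of Appendix~C.
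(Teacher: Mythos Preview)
Your proposal is correct and follows the same accounting strategy the paper uses (the paper itself gives only a one-line justification in Theorem~10.2: ``per-class probes are $O(1)$ per group and $\sum_{i,r}|\mathcal S_{i,r}|=O(m\log n)$ w.h.p.'', and states Proposition~\ref{prop:D-complexity} without further proof). Your three-stage decomposition and the explicit LCP argument for the streaming pass supply the details the paper omits; the only caveat is that your claim $\sum_e K_e=O(m)$ w.h.p.\ really uses a global Bernstein on the sum (or Lemma~6.2 summed over $x$, giving $O(m+n\log n)$) rather than Lemma~6.1 alone, but either way the final $\tilde O(m\log n)$ bound goes through.
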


\begin{corollary}[Auditable NO]\label{cor:D-no}
If \(\mathrm{LogPairs}^{(t)}=\mathcal Q^{(t)}\) for all \(t\) and all \texttt{ClassLogs}/\texttt{SlotLogs}/\texttt{AdjLogs} entries pass their local checks, then NO is correct. Conversely, any false NO induces acceptance of a positive instance in some key of \(\mathcal Q^{(t)}\) and thus rejection of the certificate, except with probability \(\le n^{-c}\) from the degree-2 coincidence budget. \emph{(parameters from Table~1, failure \(\le n^{-c}\))}
\end{corollary}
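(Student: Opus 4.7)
The plan is to reduce both directions to deterministic statements about $(G,\texttt{Seeds})$ by combining three ingredients already in place: coverage equality $\mathrm{LogPairs}=\mathcal Q$ from Theorem~D.5, single-run YES coverage via group amplification from Lemma~7.2$'$, and zero false positives from Theorem~5.4. A single union bound over $|\mathcal T|\le O(m^{3/2})$ triangles and over all $O(m\log^2 n)$ gated events will fold every failure mode into the global $n^{-c}$ budget of Proposition~C.0.

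For the forward direction I assume $\mathrm{LogPairs}^{(t)}=\mathcal Q^{(t)}$ for every $t$ and that no \texttt{AdjLogs} entry flags adjacency. Lemma~7.2$'$, applied per triangle and union-bounded over $\mathcal T$, yields that with probability $\ge 1-n^{-c-1}$ every triangle in $G$ produces a canonical collision key $\kappa_t$ lying in some $\mathcal Q^{(t)}$. Coverage equality then guarantees that $\kappa_t$ is logged exactly once, and Lemmas~D.1--D.3 make the verifier's replay of slots, class triples, and per-group PK bins bit-identical to the online state. Because the three-gate pipeline is a deterministic function of $(G,\texttt{Seeds})$ once witnesses are fixed (Theorem~5.4), a genuine triangle would pass both 1-sparse identities tautologically and return $\{v,w\}\in E$ at the explicit adjacency step; since no adjacency probe returned true, no triangle exists in $G$ on this high-probability event.

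The converse is a contrapositive. Suppose $G$ contains a triangle $\{u,v,w\}$ but a NO certificate is produced. With probability $\ge 1-n^{-c-1}$ the triangle is represented by some $\kappa_t\in\mathcal Q^{(t)}$. If the certificate is faithful, the verifier reruns the three gates on $\kappa_t$: the bin- and slot-level 1-sparse identities hold on the genuine singleton multisets (Lemma~5.1, singleton case), and the explicit adjacency probe returns true, so the verifier outputs YES and rejects the NO claim. A dishonest certificate must either break coverage (caught at the $\mathrm{LogPairs}=\mathcal Q$ comparison) or tamper with stored triples so that the re-evaluated gates still fail on the triangle; the latter requires a degree-$2$ algebraic coincidence, which is at most $1/P$ per site and union-bounds to $\le n^{-c-1}$ by item~(a) of Appendix~C.5.

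The main obstacle I foresee is aligning the probability spaces so that Lemma~7.2$'$ (conditioned on slot/class non-degeneracy) and the degree-$2$ false-pass bounds (conditioned on $2$-wise independence of $h_{\mathrm{id}}$) can be union-bounded cleanly without double-counting. The cleanest route is to fix a single favorable meta-event $\mathcal F$ requiring simultaneously: (i) all keep-coin concentrations from Lemma~6.2; (ii) load-$O(1)$ for every materialized class per Corollary~6.3; (iii) per-triangle group amplification for every $T\in\mathcal T$ as in Lemma~7.2$'$; and (iv) no multi-item 1-sparse identity spuriously passing. By items~(a)--(d) of Appendix~C.5, $\Pr[\mathcal F]\ge 1-n^{-c}$, and on $\mathcal F$ both directions reduce to deterministic replays of Seeds against $G$, discharging the corollary.
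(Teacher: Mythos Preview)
Your proof is essentially correct and supplies considerably more detail than the paper, which states this corollary without proof (it is meant to follow from Theorem~\ref{thm:D-coverage} and the replay lemmas). Your decomposition---fix a favorable meta-event $\mathcal F$ over \texttt{Seeds}, then argue deterministically on $\mathcal F$ using coverage equality, YES coverage via group amplification (Lemma~7.2$'$ union-bounded over $|\mathcal T|\le O(m^{3/2})$), replay determinism (Lemmas~\ref{lem:D-slot}--\ref{lem:D-pk}), and zero-FP (Theorem~5.4)---matches the paper's intended route.

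One point to tighten: you assert that on $\mathcal F$ a genuine triangle ``would pass both 1-sparse identities tautologically'' and later that the identities ``hold on the genuine singleton multisets.'' This requires the two complementary PK bins at $\kappa_t$ to be \emph{singletons} whose stored witnesses are exactly the triangle's edges; neither follows from $\kappa_t\in\mathcal Q^{(t)}$ alone (a bin may contain other class items, in which case the bin test legitimately fails and the witnesses may be different edges). The paper itself blurs ``hit'' and ``confirmed'' in Section~\ref{sec:hitting} and Theorem~9.1(V), so you are in good company. The clean patch is to fold into $\mathcal F$ the event that in the hitting group the two relevant bins receive no other class items: on a constant-load class with $T_{i,r^\star}=\Theta(\log n)$ bins and independent per-group hashes $g^{(t)}$, this holds with probability $1-O(1/\log n)$ per group and is absorbed into the constant $\theta$ exactly as the paper does when it says the constant preconditions ``are absorbed into $\theta$.'' With that addendum, your meta-event $\mathcal F$ carries confirmation rather than mere bin alignment, and both directions go through deterministically on $\mathcal F$ as you claim. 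Your remark about a dishonest prover needing a ``degree-2 algebraic coincidence'' is slightly off---tampered triples are caught deterministically by the replay comparison of Lemmas~\ref{lem:D-slot}--\ref{lem:D-pk}, not probabilistically---but this does not affect the conclusion.
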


\subsection*{D.7 Notes on fixed points and de-duplication}
If \(j=j^{\star}\), we require \(M_t(C,j)\ge 2\) to include the key—mirroring the online rule that a single witness cannot collide with itself. Canonicalization via \(\min\{j,j^{\star}\}\) ensures exactly one key per unordered complementary pair. Keys are 6-tuples \((i,x,r,b,t,\beta)\) that fit in \(O(1)\) words and are compared verbatim (hashing is indexing only).

\paragraph{Takeaway.}
\(\mathcal Q\) is a \emph{seed-only} object: the verifier reconstructs it without logs, checks \(\mathrm{LogPairs}=\mathcal Q\), and then replays the three gates on exactly those pairs. This realizes the claim that \emph{``should--check domain = log coverage''} and delivers a \(\tilde O(m\log n)\), zero--FP, auditable NO certificate.


\section*{Appendix E: Parameters \& Failure--Probability Budget}
\label{app:params-budget}

This appendix fixes concrete constants and gives a \emph{closed} one--sided error budget for all randomized equalities used by the algorithm and the verifier. Unless stated otherwise, probabilities are over \texttt{Seeds}; all algebra is over a prime field \(\mathbb F_P\).

\subsection*{E.1  Fixed constants and notation}
We instantiate the tunable parameters as absolute constants (any comparable choices work):
\begin{itemize}\itemsep 2pt
\item \textbf{Slots per anchor.} \(M_x:=c_M\,d(x)\) with \(c_M=16\).
\item \textbf{Per-anchor key budget.} \(B_{x,i}:=\lceil c_B\,d(x)\,p_i\rceil\) with \(c_B=8\).
\item \textbf{Prefix levels.} \(L_{x,i}:=\lceil\log_2 B_{x,i}\rceil\ \le\ O(\log n)\).
\item \textbf{PK buckets per class.} \(T_{i,r}:=\lceil c_T \log_2 n\rceil\) with \(c_T=16\).
\item \textbf{Layer count.} \(I:=\lceil c_R \log_2 n\rceil\) with \(c_R=8\).
\item \textbf{Per-class PK groups.} \(R:=\lceil c_G \log_2 n\rceil\) with \(c_G=8\). \emph{Groups are mutually independent and independent of layers/prefixes.}
\item \textbf{Keep--rate schedule.} \(p_i=2^{-(i+2)}\) for \(i=1,\dots,I\). Then
  \(\displaystyle \sum_i p_i\le\frac12\) and \(\displaystyle \sum_i p_i^2\le\frac14\).
\item \textbf{Independence.} ID/slot/prefix/PK hashes are 2-wise independent for algebraic tests;
keep-coins \(c_i(\cdot)\) and (optionally) PK hashing use \(k\)-wise independence with
\(k:=\lceil c_k \log_2 n\rceil\) and \(c_k\ge 12\) (Appendix~\ref{app:prob-indep}).
\item \textbf{Field size.} \(P:=n^{\kappa}\) for an integer \(\kappa\ge \kappa_{\min}\) chosen below. All field ops fit in \(O(1)\) words.
\end{itemize}

\subsection*{E.2  Work/space envelopes (for counting bad events)}
With the schedule and independence above, the following bounds hold w.h.p. (proved in the main text):
\begin{align*}
&\text{(total slots)} &&\sum_x M_x \;=\; c_M\sum_x d(x) \;=\; 2c_M\,m \;=\; O(m),\\
&\text{(nonempty classes)} &&\sum_{i,r} |\mathcal S_{i,r}| \;=\; O(m\log n),\\
&\text{(grouped probed-bin sites)} &&\sum_{i,r}\sum_{t=1}^{R}\! |\text{ProbedPairs}(i,\cdot,r,\cdot,t)| \;=\; O\!\big(R\cdot\!\sum_{i,r}|\mathcal S_{i,r}|\big)=O(m\log^2 n),\\
&\text{(pair checks)} &&\sum_{i,r} Q_{i,r} \;=\; O(m\log^2 n),\\
&\text{(witness materializations)} &&W:=\text{ \#(slot 1-sparse materializations across all layers/levels)} \;=\; O(m\log n).
\end{align*}
We will union-bound algebraic coincidences against \(W\), \(\sum_{i,r}|\mathcal S_{i,r}|\), and the grouped probed-bin sites (which carry an extra \(R\) factor).

\subsection*{E.3  Randomized equalities and per-event error}
We use the same degree-\(\le 2\) identity at three granularities, each with one-sided error \(\le 1/P\) under 2-wise independence:
\begin{itemize}\itemsep 2pt
\item \textbf{Slot 1-sparse:} \(B^2=AC\) with \(A\neq 0\).
\item \textbf{Bin 1-sparse (per group):} \(\Xi_1^2=\Xi_0\Xi_2\) with \(\Xi_0\neq 0\).
\item \textbf{Class 1-sparse:} \(\Sigma_1^2=\Sigma_0\Sigma_2\) with \(\Sigma_0\neq 0\).
\end{itemize}
All other checks (PK complement test, dedup keys, explicit adjacency) are deterministic.

\subsection*{E.4  Bad-event catalog and union bounds (group-aware)}
\paragraph{A. Algebraic coincidences.}
\begin{itemize}\itemsep 1pt
\item \(\mathsf{A1}\) \textit{Slot false pass.} Count \(\le W=O(m\log n)\). Contribution \(\le W/P\).
\item \(\mathsf{A2}\) \textit{Bin false pass (any group \(t\)).}
\\Count \(\le \sum_{i,r}\sum_{t=1}^{R}\! |\text{ProbedPairs}(i,\cdot,r,\cdot,t)| = O(m\log^2 n)\).
Contribution \(\le O(m\log^2 n)/P\).
\item \(\mathsf{A3}\) \textit{Class false pass.} Count \(\le \sum_{i,r}|\mathcal S_{i,r}|=O(m\log n)\). Contribution \(\le O(m\log n)/P\).
\end{itemize}

\paragraph{B. Concentration/independence failures.}
\begin{itemize}\itemsep 1pt
\item \(\mathsf{B1}\) \textit{Keep-coin overloads.} Any of Lemmas~6.1--6.2 fails (per-edge or per-vertex totals).
With \(k=\Theta(\log n)\), Chernoff/Bernstein under \(k\)-wise independence gives per-object tails \(\le n^{-\Theta(1)}\);
a union bound over all objects yields \(\Pr[\mathsf{B1}]\le n^{-(c+3)}\) by choosing \(c_k\) large enough.
\item \(\mathsf{B2}\) \textit{Too many nonempty classes.} Corollary~6.3 fails.
Dominated by \(\mathsf{B1}\); hence \(\Pr[\mathsf{B2}]\le n^{-(c+3)}\).
\item \(\mathsf{B3}\) \textit{Per-$(i,r)$ accounting failure.} Lemma~9.4 fails. This is implied by \(\mathsf{B1}\) (class load overflow) or \(\mathsf{A2}\) (spurious bin/class passes), so
\(\Pr[\mathsf{B3}]\le \Pr[\mathsf{B1}]+\Pr[\mathsf{A2}]\).
\end{itemize}

\subsection*{E.5  Choosing \(\kappa\) to dominate the algebraic budget}
Let \(N_{\text{alg}}\) be the total number of algebraic tests to union bound:
\[
N_{\text{alg}}\ \le\ c_1\,W\ +\ c_2\!\sum_{i,r}\!|\mathcal S_{i,r}|\ +\ c_3\!\sum_{i,r}\sum_{t=1}^{R}\! |\text{ProbedPairs}(i,\cdot,r,\cdot,t)|
\;=\; O(m\log^2 n).
\]
Using \(m\le n^2\) and \(P=n^\kappa\),
\[
\Pr[\mathsf{A1}\cup\mathsf{A2}\cup\mathsf{A3}]\ \le\ \frac{N_{\text{alg}}}{P}\ \le\ \frac{C\,m\log^2 n}{n^{\kappa}}
\ \le\ n^{-(\kappa-2)}\cdot C\log^2 n.
\]
Pick
\[
\boxed{\ \ \kappa_{\min}\ :=\ c\ +\ 5\ \ }
\]
for any target global slack \(n^{-c}\). Then \(n^{-(\kappa-2)}\log^2 n \le n^{-(c+3)}\) for all large \(n\), implying
\(\Pr[\mathsf{A1}\cup\mathsf{A2}\cup\mathsf{A3}]\le n^{-(c+3)}\).
\emph{Remark:} Compared to the non-grouped setting, the dominant term gains an extra \(R=\Theta(\log n)\) factor via \(\mathsf{A2}\); the choice \(\kappa\ge c+5\) already subsumes this.

\subsection*{E.6  Putting it together: global bound}
With \(c_k\) large enough for concentration,
\[
\Pr\big[\text{any bad event}\big]
\ \le\ \underbrace{\Pr[\mathsf{A1}\cup\mathsf{A2}\cup\mathsf{A3}]}_{\le n^{-(c+3)}}\ +\ 
\underbrace{\Pr[\mathsf{B1}\cup\mathsf{B2}\cup\mathsf{B3}]}_{\le n^{-(c+3)}}\ \le\ n^{-c}
\]
for all sufficiently large \(n\). Thus the total one-sided failure probability (that any multi-item site spuriously passes \emph{or} that required concentration fails) is bounded by \(n^{-c}\).

\subsection*{E.7  Verifier-side budget}
The verifier replays the same algebraic identities on the same triples and therefore inherits the same algebraic budget; all remaining checks are deterministic (PK bin occupancy, de-dup canonicalization, adjacency lookups). Building the should--check domain \(\mathcal Q\) (Appendix~\ref{app:should-check}) is deterministic from \texttt{Seeds}+\(G\) and thus does not consume probability mass. Hence the \(\tilde O(m\log n)\) NO verification is sound up to the same \(n^{-c}\) bound.

\subsection*{E.8  Summary}
\begin{itemize}\itemsep 1pt
\item \(M_x=16\,d(x)\), \quad \(B_{x,i}=\lceil 8\,d(x)p_i\rceil\), \quad \(L_{x,i}=\lceil\log_2 B_{x,i}\rceil\).
\item \(T_{i,r}=16\lceil\log_2 n\rceil\), \quad \(I=8\lceil\log_2 n\rceil\), \quad \(R=8\lceil\log_2 n\rceil\), \quad \(p_i=2^{-(i+2)}\).
\item Independence: 2-wise for algebraic hashing; \(k=\lceil 12\log_2 n\rceil\)-wise for keep-coins (and optionally PK).
\item Field: \(P=n^{\kappa}\) with \(\kappa\ge c+5\) for a target global failure \(\le n^{-c}\).
\item Bad-event counts: slots/class tests \(=O(m\log n)\); grouped bin tests \(=O(m\log^2 n)\);
concentration failures \(\le n^{-(c+3)}\) by \(k\)-wise Chernoff.
\item Global bound: \(\Pr[\text{any failure}]\le n^{-c}\) for large \(n\).
\end{itemize}

\end{document}